\documentclass[nolineno,a4paper,USenglish,cleveref,autoref,thm-restate]{socg-lipics-v2021}

\pdfoutput=1 
\hideLIPIcs  


\clubpenalty=10000
\widowpenalty=10000

\usepackage[group-separator={,},output-decimal-marker={.}]{siunitx}
\usepackage{nicefrac}
\usepackage{complexity}
\usepackage{microtype}

\usepackage[]{algorithm2e}
\usepackage{longtable}
\usepackage[commandnameprefix=always,final]{changes}

\crefname{equation}{Equation}{Equations}

\bibliographystyle{plainurl}

\title{Exact Algorithms for Minimum Dilation Triangulation}



\author{Sándor P. Fekete}{Department of Computer Science, TU Braunschweig}{fekete@tu-braunschweig.de}{https://orcid.org/0000-0002-9062-4241}{}
\author{Phillip Keldenich}{Department of Computer Science, TU Braunschweig}{keldenich@ibr.cs.tu-bs.de}{https://orcid.org/0000-0002-6677-5090}{}
\author{Michael Perk}{Department of Computer Science, TU Braunschweig}{fekete@tu-braunschweig.de}{https://orcid.org/0000-0002-0141-8594}{}
\authorrunning{S.\,P.\ Fekete and P.\ Keldenich and M.\ Perk} 

\Copyright{Sándor P. Fekete and Phillip Keldenich and Michael Perk} 

\begin{CCSXML}
<ccs2012>
<concept>
<concept_id>10003752.10010061.10010063</concept_id>
<concept_desc>Theory of computation~Computational geometry</concept_desc>
<concept_significance>500</concept_significance>
</concept>
<concept>
<concept_id>10003752.10003809.10003716.10011136</concept_id>
<concept_desc>Theory of computation~Discrete optimization</concept_desc>
<concept_significance>500</concept_significance>
</concept>
<concept>
<concept_id>10002944.10011123.10011131</concept_id>
<concept_desc>General and reference~Experimentation</concept_desc>
<concept_significance>300</concept_significance>
</concept>
<concept>
<concept_id>10002950.10003714.10003715.10003725</concept_id>
<concept_desc>Mathematics of computing~Interval arithmetic</concept_desc>
<concept_significance>300</concept_significance>
</concept>
<concept>
<concept_id>10002950.10003714.10003715.10003726</concept_id>
<concept_desc>Mathematics of computing~Arbitrary-precision arithmetic</concept_desc>
<concept_significance>100</concept_significance>
</concept>
</ccs2012>
\end{CCSXML}

\ccsdesc[500]{Theory of computation~Computational geometry}
\ccsdesc[500]{Theory of computation~Discrete optimization}
\ccsdesc[300]{General and reference~Experimentation}
\ccsdesc[300]{Mathematics of computing~Interval arithmetic}
\ccsdesc[100]{Mathematics of computing~Arbitrary-precision arithmetic}
\keywords{dilation, minimum dilation triangulation, exact algorithms, algorithm engineering, experimental evaluation}

\category{} 



\funding{
  The work presented in this paper was largely funded by the DFG grant 
  ``Computational Geometry: Solving Hard Optimization Problems'' (CG:SHOP), grant FE407/21-1.} 

\acknowledgements{}




\EventEditors{Oswin Aichholzer and Haitao Wang}
\EventNoEds{2}
\EventLongTitle{41st International Symposium on Computational Geometry (SoCG 2025)}
\EventShortTitle{SoCG 2025}
\EventAcronym{SoCG}
\EventYear{2025}
\EventDate{June 23--27, 2025}
\EventLocation{Kanazawa, Japan}
\EventLogo{socg-logo.pdf}
\SeriesVolume{332}
\ArticleNo{174}     

\newcommand{\lbEpsilon}{2.730751 \cdot 10^{-16}}
\newcommand{\lbDelta}{6.458762 \cdot 10^{-16}}
\newcommand{\lbN}{84}
\newcommand{\lbRho}{1.44116645381}
\newcommand{\lbRhoShort}{1.44116}
\begin{document}

\supplement{Code, experiment instances and results are archived on Zenodo.}
\supplementdetails[subcategory={Source Code}]{Software}{https://doi.org/10.5281/zenodo.14266122}
\supplementdetails[subcategory={Experiment Data}]{Dataset}{https://doi.org/10.5281/zenodo.14266122}

\maketitle

\begin{abstract}
We provide a spectrum of new theoretical insights and practical results 
for finding 
a Minimum Dilation Triangulation (MDT), a natural geometric optimization 
problem of considerable previous attention:
Given a set $P$ of $n$ points in the plane, find a triangulation
$T$, such that a shortest Euclidean path in $T$ between any pair of points
increases by the smallest possible factor compared to their
straight-line distance. No polynomial-time algorithm is known for the problem;
moreover, evaluating the objective function involves computing the sum
of (possibly many) square roots. 
On the other hand, the problem is not known to be \NP-hard.

(1) We provide practically robust methods and implementations for computing an MDT
for benchmark sets with up to 30,000 points in reasonable time on commodity
hardware, based on new geometric insights into the structure of optimal edge
sets. Previous methods only achieved results for up to $200$ points, so we extend 
the range of optimally solvable instances by a factor of $150$.

(2) We develop scalable techniques for accurately
evaluating many shortest-path queries that arise as large-scale sums of square
roots, allowing us to certify exact optimal solutions,  
with previous work relying on (possibly inaccurate) floating-point computations.

(3) We resolve an open problem by establishing a lower bound of
$\lbRhoShort$ on the dilation of the regular $\lbN$-gon (and thus for arbitrary point
sets), improving the previous worst-case lower bound of $1.4308$ 
and greatly reducing the remaining gap to the upper bound of
$1.4482$ from the literature. In the process, we provide optimal solutions for regular
$n$-gons up to $n = 100$.
\end{abstract}

\section{Introduction}
\label{sec:introduction}
Triangulating a set of points is one of the classical problems in computational
geometry. On the practical side, it has natural applications in wireless sensor
networks~\cite{DBLP:journals/comcom/WuLC07,DBLP:conf/infocom/ZhouWXJD11}, mesh
generation~\cite{bern1995mesh}, computer
vision~\cite{DBLP:conf/evoW/Vite-SilvaCTF07}, geographic information
systems~\cite{DBLP:journals/gis/Tsai93} and many other
areas~\cite{DBLP:books/lib/BergCKO08}. On the theoretical side,
finding a triangulation that is optimal with respect to some 
objective function has also received considerable attention:
The Delaunay triangulation maximizes the minimum angle and minimizes the
maximum circumcircle of all triangles.  Minimizing the
maximum edge length is possible in quadratic
time~\cite{DBLP:journals/siamcomp/EdelsbrunnerT93}.  On the other hand,
maximizing the minimum edge length is
\NP-complete~\cite{DBLP:journals/jocg/FeketeHHST18}. Famously,
Mulzer and Rote~\cite{DBLP:journals/jacm/MulzerR08} showed that 
computing the Minimum Weight Triangulation (MWT) 
is \NP-hard.

In this paper, we provide new results and insights for another natural
objective that considers triangulations as sparse structures with relative low
cost for ensuing detours: The \emph{dilation} of a triangulation $T$ of a point
set $P$ is the worst-case ratio (among all $s, t \in P$) between the shortest
$s$-$t$-path in $T$ and the Euclidean distance between $s$ and $t$.  The
Minimum Dilation Triangulation (MDT) problem asks for a triangulation that
minimizes the dilation~$\rho$, see~\cref{fig:example-instances} for examples.  This problem is closely related to the concept of a
Euclidean $t$-spanner: a subgraph 
with dilation (also called \emph{spanning ratio} or \emph{stretch
factor}~\cite{DBLP:journals/siamcomp/NarasimhanS00}) at most~$t$.  
Spanners have application in areas as robotics, network
design~\cite{DBLP:journals/tpds/AlzoubiLWWF03,DBLP:journals/dam/FarleyPZW04},
sensor networks~\cite{DBLP:journals/siamdm/CaiC95,DBLP:conf/sensys/FanLS06} and
design of parallel machines~\cite{DBLP:journals/comgeo/AronovBCGHSV08} and
have been studied extensively~\cite{DBLP:journals/ijcga/ChandraDNS95}.
Computing the MDT amounts to computing a \emph{plane} spanner with 
smallest spanning ratio, as every plane spanner can be extended into a
triangulation.  
For many types of triangulations, such as the Delaunay triangulation
or the MWT, both lower and upper bounds on the worst-case dilation are
known. Moreover, a constant upper bound on the worst-case dilation of a
triangulation also implies a constant-factor
approximation of the MDT.  Lower and upper bounds on the
worst-case dilation of the MDT have been studied, both for general
point sets and for special point sets such as regular polygons or points on a
circle; see \cref{subsec:related} for further details.

\begin{figure}
    \includegraphics[width=\linewidth]{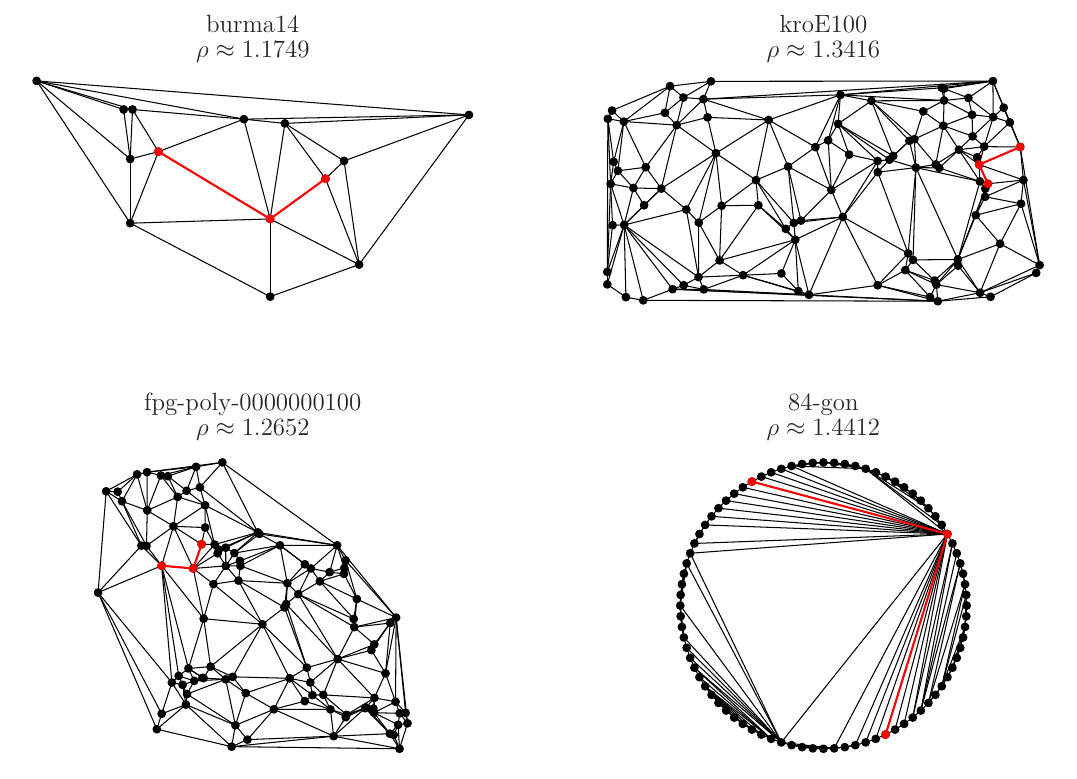}
    \caption{MDT solutions for four instances. The red edges indicate a dilation-defining path.}
    \label{fig:example-instances}
\end{figure}

Despite this importance and attention, actually computing a Minimum Dilation
Triangulation is a challenging problem. Its computational
complexity is still unresolved, signaling that there may not be a simple and
elegant algorithmic solution that scales well. Moreover, actually computing
accurate dilations involves computing shortest paths under Euclidean distances
between $\Theta(n^2)$ pairs of points requires dealing with another famous 
challenge~\cite{o1981advanced,TOPP,bloemer1991computing,eisenbrand2024improved},
as it corresponds to evaluating and comparing numerous sums of many square roots.

\subsection{Our contributions}
We provide various new contributions, both to theory and practice.

\begin{enumerate}
\item We present practically robust methods and implementations for computing
\chreplaced{an optimal}{the} MDT for benchmark sets with up to \num{30000} points in reasonable time on
commodity hardware, based on new geometric insights into the structure of optimal
edge sets. 
Previous methods only achieved results for up to \num{200} points
(involving one computational routine of complexity $\Theta(n^4)$ 
instead of our improved complexity of $O(n^2\log n)$), so we extend
the range of practically solvable instances by a factor of \num{150}. 
\item We develop scalable techniques for accurately evaluating many shortest-path queries 
that arise as large-scale sums of square roots, allowing us to certify exact optimal solutions.
This differs from previous work, which relied on floating-point computations, without regard for errors resulting from numerical issues.
\item We resolve an open problem from~\cite{DBLP:journals/ijcga/DumitrescuG16}
by establishing a lower bound of $\lbRhoShort$ on the dilation of the regular $\lbN$-gon
(and thus for arbitrary point sets).  
This improves the previous worst-case lower bound of $1.4308$ 
from the regular $23$-gon and greatly reduces the remaining gap to the upper
bound of $1.4482$ from~\cite{DBLP:journals/comgeo/SattariI19}.
In the process, we provide optimal solutions for regular $n$-gons up to $n=100$.
\end{enumerate}

\subsection{Related work}
\label{subsec:related}
The complexity of finding the MDT is unknown~\cite{DBLP:books/el/00/Eppstein00}.
\cite{DBLP:journals/ijcga/GiannopoulosKKKM10} prove that finding the minimum dilation graph with a limited number of edges is \NP-hard.
\cite{DBLP:journals/comgeo/CheongHL08} show that finding a spanning tree of given dilation is also \NP-hard.
Kozma~\cite{DBLP:conf/esa/Kozma12} proves \NP-hardness for minimizing the expected distance between random points in a triangulation,
with edge weights instead of Euclidean distances.
For surveys, see 
Eppstein~\cite{DBLP:books/el/00/Eppstein00} until 2000
and~\cite{DBLP:books/daglib/0017763} for more recent results.

On the practical side, 
the authors of~\cite{DBLP:journals/corr/abs-2305-11312} study the problem of finding sparse low dilation graphs for large point sets in the plane.
The authors of~\cite{DBLP:conf/esa/BuchinBGW24} present an approximation algorithm for the related problem of improving a given graph with a budget of $k$ edges such that the dilation is minimized.
Regarding the MDT, all practical approaches in the literature are based on fixed-precision arithmetic.
Klein~\cite{klein2006effiziente} used an enumeration algorithm to find an optimal MDT for up to \num{10} points. 
The authors of~\cite{DBLP:journals/heuristics/DorzanLMH14} present heuristics for the MDT and evaluate their performance on instances with up to \num{200} points.
Instances with up to \num{70} points were solved by 
the authors of~\cite{DBLP:conf/cccg/BrandtGSR14} using integer linear programming techniques.
In their approach, the edge elimination strategy from Knauer~and~Mulzer~\cite{DBLP:conf/ewcg/KnauerM05} was used to eliminate edges from the complete graph.
Recently, Sattari~and~Izadi~\cite{DBLP:journals/jgo/SattariI17} presented an exact algorithm based on branch and bound that was evaluated on instances with up to \num{200} points.

The MDT is closely related to finding a plane $t$-spanner; see~\cite{mitchell2017proximity} for an overview.
Chew~\cite{DBLP:conf/compgeom/Chew86} first proved an upper bound of $\sqrt{10}$ on plane $t$-spanners in the $L_1$-metric,
which he later improved~\cite{DBLP:journals/jcss/Chew89} to $2$ for the triangular-distance Delaunay graph in the plane.
\cite{DBLP:journals/jocg/BiniazAMSBC16} proved that any convex point set admits a plane $1.88$-spanner. 
For a centrally symmetric convex point set containing $n$ points, Sattari and Izadi~\cite{DBLP:journals/ipl/SattariI18} give an upper bound of $\nicefrac{n}{2} \sin(\nicefrac{\pi}{n})$.
The best known upper bound on the dilation of arbitrary point sets is by Xia~\cite{DBLP:journals/siamcomp/Xia13}, 
who established and upper bound of \num{1.998} for the dilation of the Delaunay triangulation.

Mulzer~\cite{mulzer2004minimum} studied the MDT for the set of vertices of a regular $n$-gon and proved an upper bound of $1.48586$. 
Amarnadh~and~Mitra~\cite{DBLP:conf/iccsa/AmarnadhM06} improved this bound to $1.48454$ for any point set that lies on the boundary of a circle.
Sattari and Izadi~\cite{DBLP:journals/comgeo/SattariI19} again improved the bound to $1.4482$.
Dumitrescu and Ghosh~\cite{DBLP:journals/ijcga/DumitrescuG16} show that any triangulation of a regular $23$-gon has dilation at least $1.4308$,
improving upon the bound of $1.4161$ by Mulzer~\cite{mulzer2004minimum} and answering a question posed by Bose and Smit~\cite{DBLP:journals/comgeo/BoseS13} as well as Kanj~\cite{DBLP:conf/iccit/Kanj13}.
Dumitrescu~and~Ghosh~\cite{DBLP:journals/ijcga/DumitrescuG16} also computed dilations of regular $22$-gon and regular $24$-gon.

\section{Preliminaries}
\label{sec:preliminaries}

Let $P \subset \mathbb{R}^2$ be a set of points in the plane.
We denote the Euclidean distance between two points $u,v \in P$ by $d(u,v)$.
For a connected geometric graph $G = (P, E)$ with $E \subseteq {P \choose 2}$, we denote the Euclidean shortest path between two points $u,v \in P$ by $\pi_G(u,v)$ and its length by $|\pi_G(u,v)|$,
omitting $G$ if it is clear from context.
The dilation $\rho_G(u,v)$ between two points $u,v$ in $G$ is the ratio $\rho_G(u,v) := \frac{|\pi_G(u,v)|}{d(u,v)}$ between the shortest path length and the Euclidean distance.
The dilation $\rho(G)$ of the graph $G$ is defined as the maximum dilation between any two points in $P$,
i.e., $\rho(G) := \max \{ \rho_G(u,v) \mid u,v \in P, u \neq v\}.$

In the remainder of this work, the graph $G$ we consider is a triangulation, i.e., a maximal crossing-free graph on $P$.
Two edges $e_1 = (p_1, q_1), e_2 = (p_2, q_2)$ are said to \emph{cross} or \emph{intersect} iff the line segments they induce intersect in their interior.
Given a point set $P$, the Minimum Dilation Triangulation problem (MDT) asks to find a triangulation $T$ of $P$ minimizing $\rho(T)$.

\section{Candidate edge enumeration}
\label{sec:edge-enumeration}

Here we describe a novel and practically efficient scheme for
enumerating a set of edges that induces a supergraph of the MDT.
We start with the underlying theoretical ideas for this supergraph, followed
by an algorithm for computing a supergraph of any triangulation with dilation \emph{strictly less} than a given bound $\rho$,
which we exploit to enumerate a supergraph of the MDT with a (usually) small number of edges.
This is further adapted to reductions of the dilation bound $\rho$.
We also discuss the computation of lower bounds on the dilation of the MDT.
We defer the discussion of some implementation details to \cref{sec:implementation-details}.

\subsection{Theoretical background}
Our supergraph is based on the well-known \emph{ellipse property} (used in \cite{DBLP:conf/cccg/BrandtGSR14,DBLP:journals/ijcga/GiannopoulosKKKM10,DBLP:conf/ewcg/KnauerM05}) that all edges of a triangulation $T$ with dilation below $\rho$ must satisfy.

\begin{definition}
  A pair of points $s, t$ has the \emph{ellipse property} with 
  respect to a point set $P$ and a dilation bound $\rho$ if, for
  any pair of points $\ell, r \in P \setminus \{s,t\}$ such that
  $\ell r$ and $st$ intersect, $\min \{d(\ell,s) + d(s,r), d(\ell,t) + d(t,r)\} < \rho d(\ell,r)$. 
\end{definition}

Recall that the set of points that have the same sum of distances $\kappa$ to 
two points $\ell, r$ with $\kappa \geq d(\ell,r)$ is an \emph{ellipse} with $\ell, r$ as its focal points (or \emph{foci});
thus, all paths between $\ell$ and $r$ with length less than $\kappa = \rho \cdot d(\ell,r)$ must lie strictly inside this ellipse.

If a pair of points $s,t$ does not have the ellipse property, then there is a pair of points $\ell, r$
such that $st$ cuts through all paths between $\ell$ and $r$ that could have dilation less than $\rho$; see \cref{fig:ellipse-property-example}.
We call such a pair of points $\ell, r$ an \emph{exclusion certificate} for $s,t$.
\begin{figure}%
  \centering
  \includegraphics{./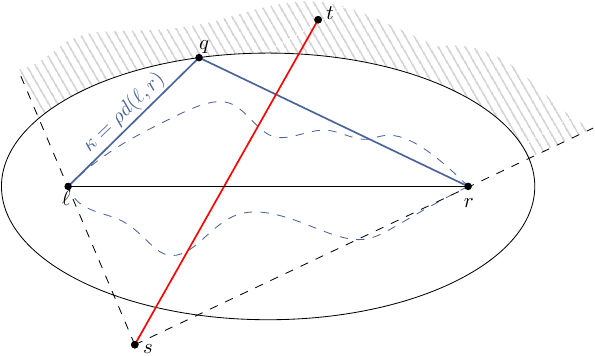}
  \caption{
    Any path connecting $\ell$ and $r$ with length below $\kappa = \rho d(\ell,r)$ must lie within the ellipse (dashed black lines).
    The edge $st$ does not have the \emph{ellipse property} as neither $s$ nor $t$ lie inside the ellipse;
    therefore, inserting $st$ makes connecting $\ell$ and $r$ by a sufficiently short path impossible.
  }
  \label{fig:ellipse-property-example}
\end{figure}%

\begin{observation}
  Let $\rho \geq 1$ be some dilation bound and let $T$ be a triangulation containing the edge $st$
  for points $s,t$ that do not have the ellipse property w.r.t.\ $P$ and $\rho$.
  Then, $\rho(T) \geq \rho$.
\end{observation}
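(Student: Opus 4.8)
The plan is to exploit the exclusion certificate guaranteed by the failure of the ellipse property and to treat the edge $st$ as an obstacle that forces any $\ell$-$r$ path in $T$ to be long. First I would unpack the hypothesis: since $s,t$ do not have the ellipse property w.r.t.\ $P$ and $\rho$, there exist points $\ell, r \in P \setminus \{s,t\}$ such that $\ell r$ and $st$ intersect in their interiors and
\[
  \min\{d(\ell,s)+d(s,r),\, d(\ell,t)+d(t,r)\} \ge \rho\, d(\ell,r).
\]
Because the segments $\ell r$ and $st$ cross, the four points form a convex quadrilateral whose diagonals are exactly $\ell r$ and $st$; in particular, $\ell$ and $r$ lie on opposite sides of the line through $s$ and $t$.

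Next I would use that $T$ is a triangulation, hence crossing-free, and contains $st$. Every edge of the shortest path $\pi_T(\ell,r)$ is an edge of $T$, so none of them crosses the interior of $st$; moreover, no point of $P$ lies in the interior of $st$, since otherwise $st$ would not be an edge of a triangulation. Hence $\pi_T(\ell,r)$ is a polygonal path from $\ell$ to $r$ that avoids the open segment $st$ entirely.

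The crux is a purely geometric lower bound: any path from $\ell$ to $r$ avoiding the open segment $st$, when $\ell r$ crosses $st$, has length at least $\min\{d(\ell,s)+d(s,r),\, d(\ell,t)+d(t,r)\}$. I would argue this by viewing the open segment as a convex, one-dimensional obstacle separating $\ell$ from $r$: a shortest such path is a geodesic in the slit plane, and a geodesic runs straight except where it bends around a vertex of the obstacle, so it can only bend at an endpoint $s$ or $t$ of the segment. Thus the shortest detour is either $\ell \to s \to r$ or $\ell \to t \to r$ (both of which are admissible, being edges of the convex quadrilateral rather than crossings of the diagonal $st$), giving the claimed minimum. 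This is the step I expect to be the main obstacle, since it relies on the intuitively clear but slightly delicate fact that the only admissible bend points of the geodesic are the two endpoints; I would make this rigorous by noting that any candidate bend at an interior point of $st$ admits a strictly shorter shortcut on one side, so an optimal path bends only at $s$ or $t$.

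Combining the pieces finishes the proof. The path $\pi_T(\ell,r)$ is admissible for the geometric bound, so
\[
  |\pi_T(\ell,r)| \;\ge\; \min\{d(\ell,s)+d(s,r),\, d(\ell,t)+d(t,r)\} \;\ge\; \rho\, d(\ell,r),
\]
where the second inequality is exactly the exclusion-certificate condition. Therefore $\rho_T(\ell,r) = |\pi_T(\ell,r)|/d(\ell,r) \ge \rho$, and since $\rho(T)$ is the maximum dilation over all pairs, we conclude $\rho(T) \ge \rho_T(\ell,r) \ge \rho$.
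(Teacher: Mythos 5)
Your proof is correct, and it reaches the conclusion by a slightly different route than the paper. The paper states this as an observation with no formal proof; its implicit justification is region-based: every $\ell$-$r$ path of length below $\kappa = \rho\, d(\ell,r)$ must lie strictly inside the ellipse with foci $\ell, r$, and since the failure of the ellipse property means neither $s$ nor $t$ lies inside that ellipse while $st$ crosses $\ell r$, the edge $st$ severs the ellipse's interior and thus ``cuts through'' every sufficiently short path --- so a crossing-free graph containing $st$ cannot contain one. You instead prove a direct metric lemma about geodesics in the slit plane: any path from $\ell$ to $r$ avoiding the open segment $st$ has length at least $\min\{d(\ell,s)+d(s,r),\, d(\ell,t)+d(t,r)\}$, because a geodesic around a segment obstacle can bend only at its endpoints; combined with the negated ellipse-property inequality this gives $|\pi_T(\ell,r)| \geq \rho\, d(\ell,r)$ immediately. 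The two arguments encode the same geometry --- the ellipse is, by definition, exactly the sublevel set of the quantity your lemma bounds --- but they trade different delicate steps: the paper's version needs a Jordan-type separation claim (any in-ellipse path must cross the chord induced by $st$), while yours needs the standard but ``slightly delicate'' fact that the slit-plane geodesic bends only at $s$ or $t$, which you correctly flag and sketch. Your version also buys explicit quantitative control and, commendably, handles a degeneracy the paper silently assumes away: you note that no point of $P$ may lie in the relative interior of the edge $st$, which is needed because a path vertex on the open segment would evade the slit bound (and indeed, with collinear points, the observation could otherwise fail); this matches the non-degeneracy convention the paper uses elsewhere (e.g., it verifies the absence of collinear points in its $n$-gon theorem).
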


\subsection{High-level description}
\label{sec:ellipse-filter-algo}
Preliminary experiments showed that a brute-force check of each of the $\Theta(n^4)$ pairs of
potential edges is impractical for large $n$, dominating the overall runtime time even for
early versions of our solution approach.

We therefore developed a more efficient scheme to enumerate a superset of the edges that satisfy the ellipse property.
This scheme performs a \emph{filtered incremental nearest-neighbor search} from each point $p \in P$
to identify all candidate edges of the form $pt$, i.e., looking for all possible \emph{neighbors} $t$ of $p$.
This search is efficiently implemented on a quadtree containing all points.
While enumerating candidates, we construct so-called \emph{dead sectors}, i.e., regions of the plane that
cannot contain possible neighbors of $p$.
We exclude all points that lie in dead sectors; we also use dead sectors to prune entire nodes of the quadtree
and to terminate the search early if it has become clear that all remaining points must be in a dead sector.
This usually avoids considering most points as potential neighbors of $p$ individually.
The algorithm has a runtime of $O(nk\log n)$, where $k$ is the average number of points and quadtree vertices considered individually from each point.
At worst, this can be $O(n^2\log n)$; in practice, $k$ is often much lower than $n$.
A related, slightly less complex enumeration algorithm applied to minimum-weight triangulations by Haas~\cite{haasmwt} scales to $10^8$ points.
In the following, we describe the components of the enumeration scheme in more detail.

\subsection{Dead sectors}
\label{sec:dead-sectors}
We begin by giving a definition of the dead sectors we use.
\begin{definition}
  Given a dilation $\rho$, a source point $p$ and two points $\ell, r$,
  the \emph{dead sector} $\mathcal{DS}_{\rho}(p, \ell, r) \subset \mathbb{R}^2$ is the region of all points $t$ such that 
  $pt$ intersects $\ell r$ and neither $p$ nor $t$ lie in the ellipse with foci $\ell, r$ and focal distance sum 
  $\kappa = \rho d(\ell, r)$.
\end{definition}
Depending on $p$, $\rho$, $\ell$ and $r$, $\mathcal{DS}_{\rho}(p, \ell, r)$ is either empty (if $p$ is in the ellipse)
or it is bounded by two rays and an elliptic arc; see \cref{fig:dead-sector-construction}.
In that case, it can also be interpreted as an elliptic arc and an interval of polar angles around $p$.

During our enumeration we construct many dead sectors,
the union of which can become quite complex, making it cumbersome and inefficient to work with directly.
We instead chose to simplify the shape of our dead sectors, giving up 
a small fraction of excluded area in exchange for a simple and efficient representation.
We replace the elliptic arc by a single disk centered on $p$,
whose radius is at least the maximum distance from $p$ to any point on the elliptic arc.
We can hence represent each non-empty simplified dead sector by a polar angle interval around $p$ 
and a single radius called \emph{activation distance} $A_{\rho}(p,\ell,r)$; see \cref{fig:dead-sector-construction}.

\begin{figure}
  \centering
  \includegraphics{./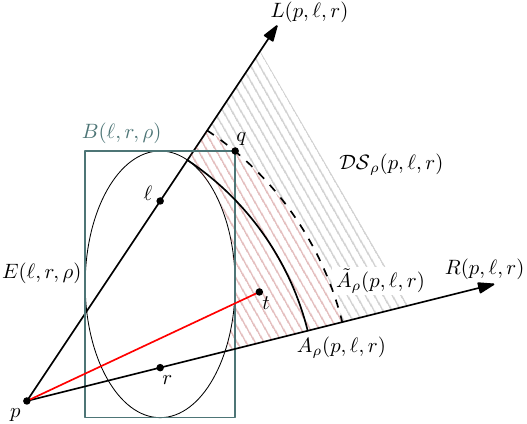}
  \caption{A dead sector $\mathcal{DS}_{\rho}(p, \ell, r)$, 
  shaded in gray and red with the ellipse $E(\ell, r, \rho)$.
  We approximate the ellipse by a disk centered at $p$ with radius $\tilde{A}_{\rho}(p,\ell,r)$ (thereby ignoring the red area),
  which is the distance from $p$ to the farthest point $q$ of the rectangle 
  $B(\ell, r, \rho)$ in $\mathcal{DS}_{\rho}(p, \ell, r)$.}
  \label{fig:dead-sector-construction}
\end{figure}

We initially attempted to compute fairly precise upper bounds on the approximation distance;
however, due to computational and numerical issues described in \cref{sec:precise-activation-distances},
we decided to use a more robust and efficient approach.
Instead of using the elliptic arc,
we compute an upper bound $\tilde{A}_{\rho}(p,\ell,r)$ using a minimal rectangle $B(\ell, r, \rho)$ containing the ellipse with sides are parallel and perpendicular to $\ell r$.
We only need to check the extreme points of $B(\ell, r, \rho)$ and its intersections with the rays $L(p,\ell,r)$ and $R(p,\ell,r)$ to compute an upper bound $\tilde{A}_{\rho}(p, \ell, r)$; see \cref{fig:dead-sector-construction}.

In the worst case, these simplifications may lead to additional candidate edges. 
While this cannot make the resulting graph exclude any edges that satisfy the ellipse property,
it is still undesirable; we use additional checks for further reductions later on.

\subsection{Quadtree}
We use a quadtree containing all points in $P$ for the filtered incremental search.
The points are stored in a contiguous array $A_P$ outside the tree.
Each quadtree node $v$ is associated with a contiguous subrange of $A_P$ represented by two pointers.
This subrange contains the points in the subtree $\mathcal{T}_v$ rooted at $v$.
Each node also has a bounding box that contains all points in $\mathcal{T}_v$.
Each interior node has precisely four children; each leaf node contains at most a small constant number of points.
To allow the contiguity of the subranges, points are reordered during tree construction.
This has the added benefit of spatially sorting the points,
improving the probability that geometrically close points are near each other in memory~\cite{haasmwt}.

\subsection{Enumeration process}
One can think of the filtered incremental search from $p$ as a process of continuously growing a disk centered at $p$ starting with a radius of $0$.
As in the sweep-line paradigm, one encounters different types of events at discrete disk radii.
We primarily encounter events when the disk first touches a point of $P$ or the bounding box of a quadtree node.

Observe that, during a search from a point $p$, the dead sectors have two states:
either their activation distance is not yet reached, in which case they are \emph{inactive} and do not exclude any points,
or they are \emph{active} and exclude all points in a certain polar angle interval around $p$.
We therefore also introduce events when the disk radius reaches the activation distance of a dead sector.
This enables efficient management of active dead sectors as a set of polar angle intervals around $p$;
we discuss ensuing numerical issues in \cref{sec:exactness-implementation-issues}.

When we first encounter a point $t$, we have to determine whether $t$ is in any dead sector by checking the active dead sector data structure.
If it is not, we have to report it as potential neighbor of $p$, adding it to a set of points sorted by polar angle around $p$.
Furthermore, to construct new dead sectors, we combine $t$ with $O(1)$ other points of $P$;
which points we use is decided by a heuristic discussed in \cref{sec:dead-sector-construction-neighbors}.

When we encounter a quadtree node $v$, we have to determine whether $v$'s bounding box is fully contained 
in the union of all dead sectors and can thus be pruned; we thus again check the active dead sectors.
Otherwise, we have to take $v$'s children, or the points it contains if it is a leaf, into account;
they are then considered as future events.

\subsection{Initialization and postprocessing}
We initially compute the Delaunay triangulation of the point set $P$ and compute its dilation.
We also optionally attempt to improve the dilation of the triangulation by a simple improvement heuristic.
The heuristic is based on computing constrained Delaunay triangulations,
greedily adding shortcut edges as constraints to reduce the length of the path currently defining the dilation.
We then use the resulting dilation $\rho$ as bound for the enumeration process outlined in the previous sections,
enumerating only edges that could locally be in a triangulation with dilation strictly below $\rho$.

After the initial enumeration process is complete, we are left with a set of \emph{possible} edges and can safely ignore all other edges.
We postprocess these as follows.
For each possible edge $pq$, we compute the set of possible edges intersecting it.
We need this information later on to model the problem of finding a triangulation on the set of possible edges.
For each pair $st$, $\ell r$ of intersecting edges, we explicitly check whether either pair is an exclusion certificate for the other.
In many cases, this postprocessing gets us very close to the edge set that would be obtained by the trivial $\Theta(n^4)$ edge candidate enumeration algorithm;
see the experiment section for details.
We mark each edge that does not have intersecting possible edges as \emph{certain};
certain edges must be part of any triangulation with dilation less than $\rho$.

\subsection{Dilation thresholds}
We also compute a \emph{dilation threshold} $\vartheta(st)$ for each possible edge $st$.
Let $I(st)$ be the set of possible edges intersecting $st$.
For each $\ell r \in I(st)$, we can compute a lower bound on the dilation of the 
shortest path connecting $\ell$ to $r$, should $st$ be present, as \[\rho_{st}(\ell r) = \min \{d(\ell,s) + d(s,r), d(\ell,t) + d(t,r)\}/d(\ell,r);\]
see \cref{fig:dilation-thresholds}.
\begin{figure}
  \centering
  \includegraphics{./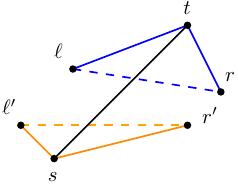}
  \caption{If $st$ (black) is present, the pairs $\ell, r$ (resp.\ $\ell',r'$)
           can only be connected by paths that have at least the length of the solid blue (resp.\ orange) path.
           Hence, the ratio between the solid and dashed blue (resp. orange) paths is a lower bound on the dilation of any triangulation containing $st$.
           The maximum of such lower bounds for $st$ is the dilation threshold $\vartheta(st)$ of $st$.}
  \label{fig:dilation-thresholds}
\end{figure}
The dilation threshold of $st$ is then $\vartheta(st) = \max_{\ell r \in I(st)} \rho_{st}(\ell r)$.
\begin{observation}
  If an edge $st$ has dilation threshold $\vartheta(st)$, it is not
  in any triangulation with dilation $\rho < \vartheta(st)$.
\end{observation}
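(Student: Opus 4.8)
The plan is to prove the statement in contrapositive form: I will show that \emph{every} triangulation $T$ containing the edge $st$ satisfies $\rho(T) \ge \vartheta(st)$, which immediately rules out $st$ from any triangulation with dilation $\rho < \vartheta(st)$. Since $\vartheta(st) = \max_{\ell r \in I(st)} \rho_{st}(\ell r)$, it suffices to fix an arbitrary edge $\ell r \in I(st)$ and establish $\rho_T(\ell,r) \ge \rho_{st}(\ell r)$; taking the maximum over $I(st)$ then yields $\rho(T) \ge \vartheta(st)$. Note that $\ell,r \in P$ are vertices of $T$, so $\rho_T(\ell,r)$ is well defined, and because $\ell r$ crosses $st$ while $T$ is crossing-free, $\ell r$ itself is not an edge of $T$.

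The heart of the argument is a geometric lower bound on $|\pi_T(\ell, r)|$. First I would observe that, since every edge of the plane graph $T$ is disjoint in its interior from the edge $st$, the shortest path $\pi_T(\ell,r)$ is a polygonal curve that does not cross the open segment $st$ (though it may pass through $s$ or $t$). Because $\ell r$ crosses $st$, the points $\ell$ and $r$ lie on opposite sides of the line $L$ supporting $st$; hence $\pi_T(\ell,r)$ must meet $L$ in some point $x$, and as the path avoids the open segment $st$, this $x$ lies on $L \setminus (s,t)$. Consequently $|\pi_T(\ell,r)| \ge d(\ell,x) + d(x,r)$, and it remains to bound the right-hand side from below over all admissible $x$.

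For that final bound I would use convexity. Consider $f(x) := d(\ell,x) + d(x,r)$ restricted to the line $L$; as a sum of two Euclidean distances it is convex along $L$. Its global minimum on $L$ is attained at the intersection $m$ of the segment $\ell r$ with $L$, where $f(m) = d(\ell,r)$; crucially, since $\ell r$ crosses the \emph{open} segment $st$, we have $m \in (s,t)$. Therefore $f$ increases monotonically as $x$ moves away from $m$ along either direction, so on each of the two rays forming $L \setminus (s,t)$ it is minimized at the corresponding endpoint $s$ or $t$. This gives $f(x) \ge \min\{d(\ell,s)+d(s,r),\, d(\ell,t)+d(t,r)\}$ for every $x \in L \setminus (s,t)$. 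Combining with the previous paragraph yields $|\pi_T(\ell,r)| \ge \min\{d(\ell,s)+d(s,r),\, d(\ell,t)+d(t,r)\}$, and dividing by $d(\ell,r)$ gives $\rho_T(\ell,r) \ge \rho_{st}(\ell r)$, as required.

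I expect the main obstacle to be the care needed in the geometric step rather than any deep difficulty: one must justify that the shortest path truly meets $L$ outside the open segment $st$ (handling the admissible case where the path passes exactly through $s$ or $t$, which is consistent with the bound), and that the convex minimizer $m$ genuinely falls in the forbidden interval $(s,t)$ — which is exactly where the crossing hypothesis $\ell r \in I(st)$ enters. Once these two facts are pinned down, the remaining steps are routine.
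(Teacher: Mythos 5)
Your proof is correct and takes essentially the same approach the paper intends: the paper states this as an unproved observation, justifying $\rho_{st}(\ell r)$ only by the pictorial argument (caption of the dilation-thresholds figure) that in any triangulation containing $st$, every $\ell$-$r$ path must detour around $s$ or $t$ -- exactly the bound $|\pi_T(\ell,r)| \geq \min\{d(\ell,s)+d(s,r),\, d(\ell,t)+d(t,r)\}$ you establish. Your formalization of that bound, locating a crossing point $x \in L \setminus (s,t)$ and minimizing the convex function $f(x) = d(\ell,x)+d(x,r)$ on each ray with the crossing hypothesis placing the minimizer $m$ inside $(s,t)$, is a sound and complete filling-in of the details the paper omits.
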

This allows us to quickly exclude further edges if we lower the dilation bound $\rho$ we aim for,
without reenumerating edges or recomputing intersections.

\subsection{Lower bounds}
We use the dilation thresholds to bound the minimum dilation.
For each edge $st$, either $st$ or some edge crossing it must be in any triangulation.
Thus, the minimum of all dilation thresholds among $\{st\} \cup I(st)$ is a lower bound on the minimum dilation.
Combining all edges, we obtain the following lower bound:
\[\rho(T) \geq \max_{st} \min \{\vartheta(pq) \mid pq \in \{st\} \cup I(st) \}.\]
We use interval arithmetic to compute a safe lower bound on this value in $O(1)$ time per intersection between two edges resulting from our enumeration.

Furthermore, by computing the points on the convex hull, we also know the number of edges $g$ of any triangulation.
This also gives us a lower bound on the minimum dilation by considering the $g$ lowest dilation thresholds. 
We also use Kruskal's algorithm to compute the lowest dilation threshold that admits a connected graph.

\section{Exact algorithms}
\newcommand{\binmdt}{\textsc{BinMDT}}%
\newcommand{\incmdt}{\textsc{IncMDT}}%
\label{sec:exact-algorithms}%
Now we present two exact algorithms:
\incmdt{} is an incremental method that uses a SAT solver for
iterative improvement, until it can prove that no better solution exists.
\binmdt{} is based on a binary search for the optimal dilation $\rho$;
once the lower and upper bound are reasonably close,
the approach falls back to \incmdt{} to reach a provably optimal solution.

\subsection{Triangulation supergraph}
Both algorithms rely on the MDT supergraph mentioned in \cref{sec:edge-enumeration}.
As part of this computation, we also obtain an initial triangulation and its dilation,
as well the intersecting possible edges $I(st)$ for each possible edge $st$.
In both algorithms, we may gradually discover triangulations with lower dilation;
these are used to exclude additional edges using the precomputed dilation thresholds $\vartheta(e)$.
To keep track of the status of each edge,
we insert all points and possible edges into a graph data structure we call \emph{triangulation supergraph}.
In this structure, we mark each edge as \emph{possible}, \emph{impossible} or \emph{certain}.
Initially, all enumerated edges are \emph{possible}.
If, at any point, all edges intersecting an edge $e$ become \emph{impossible}, $e$ becomes \emph{certain}.
If an \emph{impossible} edge becomes \emph{certain} or vice versa, the graph does not contain a triangulation any longer.
If this happens, we say we encounter an \emph{edge conflict}.

\subsection{SAT formulation}
Given a triangulation supergraph $G = (P, E)$, we model the problem of finding a triangulation 
on \emph{possible} and \emph{certain} edges using the following simple SAT formulation.
Let $E_p \subseteq E$ be the set of non-\emph{impossible} edges when the SAT formulation is constructed.
For each edge $e \in E_p$, we have a variable $x_e$.
We use the following clauses in our formulation.
\begin{align}
    &\lnot x_{e_1} \lor \lnot x_{e_2} &\forall e_1, e_2 \in E_p: e_2 \in I(e_1) \label{eq:pairwise-intersection}\\
    &x_e \lor \bigvee_{\substack{e_j \in I(e)}} x_{e_j} &\forall e \in E_p\label{eq:enforce-edges}
\end{align}
Clauses (\ref{eq:pairwise-intersection}) ensure crossing-freeness and clauses (\ref{eq:enforce-edges}) ensure maximality.
When an edge $e$ becomes certain, we add the clause $x_e$; similarly, when an edge becomes impossible, we add the clause $\lnot x_e$.
Both algorithms are based on this simple formulation;
in the following, we describe how they use and modify it to find an MDT.

\subsection{Clause generation}
The following subproblem, which we call \emph{dilation path separation}, arises in both our algorithms:
Given a dilation bound $\rho$, a triangulation supergraph $G = (P, E)$ excluding only edges that cannot be in any triangulation with dilation less than $\rho$,
a current triangulation $T$ and a pair of points $s,t \in P$ such that $|\pi_T(s,t)| \geq \rho \cdot d(s,t)$, 
find a clause $C$ that is (a) violated by $T$ and (b) satisfied by any triangulation $T'$ with $\rho(T') < \rho$.

\begin{lemma}
    Assuming a polynomial-time oracle for comparing sums of square roots,
    there is a polynomial-time algorithm that solves the dilation path separation problem.
\end{lemma}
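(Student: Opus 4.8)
The plan is to turn the too-long path $\pi_T(s,t)$ into a \emph{budget cut} on the point set and to emit a clause that forces any low-dilation triangulation to reach across this cut with an edge that the current triangulation lacks. Write $\kappa = \rho\, d(s,t)$, so that $|\pi_T(s,t)| \ge \kappa$, and for every vertex $v \in P$ let $\delta(v) := |\pi_T(s,v)|$ be the length of a shortest $s$-$v$-path in the current triangulation $T$. I would define the vertex set
\[
  U := \{\, v \in P : \delta(v) + d(v,t) < \kappa \,\},
\]
which contains $s$ (since $\rho > 1$) but not $t$ (since $\delta(t) = |\pi_T(s,t)| \ge \kappa$), together with the candidate edge set $F := \{\, e \in E_p : e \notin T \text{ and } e \text{ has an endpoint in } U \,\}$. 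The clause to return is $C := \bigvee_{e \in F} x_e$.

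Computing $C$ is straightforward once $\delta$ is known: a single Dijkstra run from $s$ in $T$ yields all values $\delta(v)$, after which $U$ and $F$ are read off in linear time and $C$ is assembled. Each $\delta(v)$ is a sum of Euclidean edge lengths, i.e.\ a sum of square roots, and both the comparisons performed inside Dijkstra and the tests ``$\delta(v) + d(v,t) < \kappa$'' defining $U$ are comparisons between such sums; this is exactly where I invoke the assumed oracle. Since Dijkstra makes a polynomial number of comparisons and the remaining bookkeeping is polynomial, the whole procedure runs in polynomial time.

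Condition (a) is immediate: every edge of $F$ is absent from $T$, so $C$ is violated by $T$. For condition (b), consider any triangulation $T'$ with $\rho(T') < \rho$ and let $\pi' = (s = u_0, u_1, \ldots, u_r = t)$ be a shortest $s$-$t$-path in $T'$, so $|\pi'| < \kappa$. Not all edges of $\pi'$ can lie in $T$, as otherwise $\pi'$ would be an $s$-$t$-path in $T$ shorter than $\pi_T(s,t)$; let $u_j u_{j+1}$ be the first edge of $\pi'$ with $u_j u_{j+1} \notin T$. The prefix $u_0, \ldots, u_j$ uses only edges of $T$, hence is an $s$-$u_j$-path in $T$ and gives $\delta(u_j) \le \sum_{i<j} d(u_i,u_{i+1})$; combining this with $\sum_{i \ge j} d(u_i,u_{i+1}) \ge d(u_j,t)$ yields $\delta(u_j) + d(u_j,t) \le |\pi'| < \kappa$, i.e.\ $u_j \in U$. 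Since $u_j u_{j+1}$ lies in a triangulation of dilation below $\rho$, the supergraph does not exclude it, so $u_j u_{j+1} \in E_p$; together with $u_j u_{j+1} \notin T$ and $u_j \in U$ this shows $u_j u_{j+1} \in F$, whence $T'$ satisfies $C$. The main obstacle is precisely this ``first new edge is anchored in $U$'' step: one must choose the budget cut so that \emph{every} low-dilation triangulation is provably forced to cross it with an edge outside $T$, while keeping all underlying length comparisons — which are genuine sums of square roots — safely delegated to the oracle.
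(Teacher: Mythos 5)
Your proof is correct, but it takes a genuinely different route from the paper's. The paper defines $\Pi$ as the set of all $s$-$t$-paths in the supergraph $G$ of length less than $\rho\, d(s,t)$ and returns a clause over a \emph{hitting set} $E' \subseteq E \setminus T$ of $\Pi$, computed by repeatedly extracting a shortest $s$-$t$-path in $G$, picking on it an edge outside $T$, forbidding that edge, and iterating; the number of iterations is bounded by $|E|$ and the oracle resolves each length comparison. You instead perform a single Dijkstra run in the \emph{current triangulation} $T$ and emit a cut-style clause over all candidate edges outside $T$ anchored in the budget set $U = \{v : \delta(v)+d(v,t) < \kappa\}$. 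Your ``first edge of $\pi'$ outside $T$'' argument is sound: the prefix bound $\delta(u_j) \le \sum_{i<j} d(u_i,u_{i+1})$, the triangle-inequality bound $\sum_{i\ge j} d(u_i,u_{i+1}) \ge d(u_j,t)$, and the premise that the supergraph excludes only edges absent from every triangulation of dilation below $\rho$ (so $u_ju_{j+1} \in E_p$) together force $u_ju_{j+1} \in F$, and the vacuous cases ($F = \emptyset$, or $\rho$ so small that no witness $T'$ exists) are handled correctly since the empty clause is then permissible. What each approach buys: yours is computationally simpler and its polynomial bound is immediate --- one shortest-path tree in the sparse graph $T$ rather than up to $|E|$ shortest-path computations in the denser supergraph; the paper's yields logically stronger (shorter) clauses, since every edge of $E'$ lies on a genuinely short $s$-$t$-path in $G$, whereas your $F$ may contain many edges incident to $U$ that lie on no short path at all --- a difference irrelevant to the lemma but important to the paper's SAT-based implementation, which works hard (bidirectional Dijkstra, vertex and edge pruning) to keep clauses small. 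Two minor remarks: the side claims $s \in U$ and $t \notin U$ need $\rho > 1$ but are never used, and in the tests $\delta(v)+d(v,t) < \rho\, d(s,t)$ you should note that clearing the factor $\rho$ (itself a ratio of sums of square roots when it comes from a computed dilation) still leaves a comparison of sums of square roots of rationals, so the assumed oracle indeed suffices --- a point the paper's own proof glosses over identically.
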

\begin{proof}
    Let $\Pi$ be the set of all $s$-$t$-paths $\pi$ in $G$ with $|\pi| < \rho \cdot d(s,t)$.
    We begin by observing that, along every path $\pi \in \Pi$, there is an edge $e \in E$
    that is not in $T$; otherwise, we get a contradiction to $|\pi_T(s,t)| \geq \rho \cdot d(s,t)$.
    Let $E' \subseteq E \setminus T$ be a set of edges such that for each $\pi \in \Pi$, 
    there is an edge $e \in E'$ on $\pi$.
    Then, $C = \bigvee_{e \in E'} x_e$ is a clause that satisfies the requirements;
    note that if $\Pi$ is empty, the empty clause can be returned.

    $T$ contains no edge from $E'$, so $C$ is violated by $T$.
    Furthermore, if a triangulation $T'$ with $\rho(T') < \rho$ does not contain any of the edges in $E'$,
    it contains none of the paths in $\Pi$.
    Therefore, $\pi_{T'}(s,t)$ uses an edge that is not in $E$, which has been excluded from all triangulations with dilation less than $\rho$; a contradiction.
    $E'$ can be computed by repeatedly computing shortest $s$-$t$-paths $\pi$;
    as long as $\pi < \rho d(s,t)$, we find an edge $e \notin T$ on $\pi$, add $e$ to $E'$ and forbid it in future paths.
    The number of edges bounds the number of iterations of this process;
    using the comparison oracle, we can efficiently perform each iteration.
\end{proof}

For a description of how we compute $E'$ in practice, see \cref{sec:practical-dilation-path-sep}.

\begin{figure}
    \includegraphics[width=\linewidth]{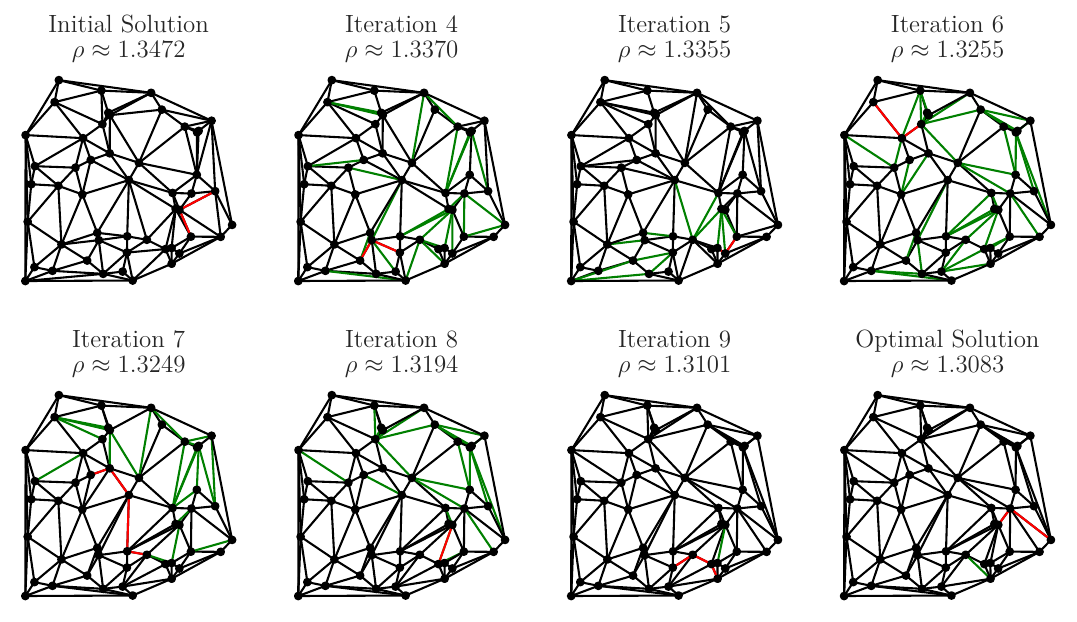}
    \caption{Progress of the incremental algorithm on an instance with $n = 50$ points. Green edges indicate changes in the triangulation, red edges indicate a dilation-defining path.}
    \label{fig:iterative-search-progress}
\end{figure}

\subsection{Incremental algorithm}
\label{sec:incremental-algorithm}
Based on the SAT formulation and the algorithm for the dilation path separation problem, \incmdt{} is simple.
Given an initial triangulation $T$ with dilation $\rho$, we enumerate the set of candidate edges and construct a triangulation supergraph $G$ with bound $\rho$.
We construct the initial SAT formula $M$ and solve it; if it is unsatisfiable, the initial triangulation is optimal.
Otherwise, we repeat the following until the model becomes unsatisfiable or we encounter an edge conflict, keeping track of the best triangulation found, see~\cref{fig:iterative-search-progress}.

We extract the new triangulation $T'$ from the SAT solver and compute the dilation $\rho'$ and a pair $s, t$ of points realizing $\rho'$.
If $\rho'$ is better than the best previously found dilation $\rho$, we update $\rho$ and mark all edges $e$ with $\vartheta(e) \geq \rho'$ as \emph{impossible}.
We then set $T = T'$ and solve the dilation path separation problem for $\rho$, $G$, $T$, $s$ and $t$.
We add the resulting clause to $M$ and let the SAT solver find a new solution.

\subsection{Binary search}
\label{sec:binary-search}
\newcommand{\rhoLB}{\rho_{\text{lb}}}
\newcommand{\rhoUB}{\rho_{\text{ub}}}

%
%
Preliminary experiments with \incmdt{} showed that we spend almost all runtime for computing dilations, 
even for instances for which we could rely exclusively on interval arithmetic, requiring no exact computations.
For many instances, most iterations of \incmdt{} resulted in tiny improvements of the dilation.
To reduce the number of iterations (and thus, dilations computed), 
we considered the binary search-based algorithm \binmdt{}.

\subsubsection{High-level idea}
At any point in time, aside from the dilation $\rhoUB$ of the best known triangulation,
\binmdt{} maintains a lower bound $\rhoLB$ on the dilation, initialized as described in \cref{sec:edge-enumeration}.

As long as $\rhoUB-\rhoLB \geq \sigma$ for a small threshold value $\sigma$, \binmdt{} performs a binary search.
It computes a new dilation bound $\rho = \frac{1}{2}(\rhoLB + \rhoUB)$.
It then uses the SAT model in a similar way as \incmdt{} to determine whether a triangulation $T$ with $\rho(T) < \rho$ exists.
If it does, it updates $\rhoUB = \rho(T)$; otherwise, it updates $\rhoLB = \rho$.

Once $\rhoUB-\rhoLB$ falls below $\sigma$, \binmdt{} falls back to a slightly modified version of \incmdt{} to find the MDT,
starting from the best known triangulation with dilation $\rhoUB$.

In the following, we describe and motivate the differences between how \incmdt{} and \binmdt{} use the SAT formulation;
for more details, see also \cref{sec:incremental-sat-solving}.

\subsubsection{Dilation sampling}
To further reduce the time spent on computing dilations, observe the following.
When a node $v$ of some graph $G$ is expanded in Dijkstra's algorithm from source $s$, we know the shortest path from $s$ to $v$ and thus the dilation $\rho_G(s,v)$.
Because $\rho_G(s,v) \leq \rho(G)$, we can compute a lower bound on the dilation much faster than the precise value by only performing a constant number of node expansions from each point $p \in P$.
We call this \emph{sampling} of the dilation.
Given a bound $\rho$ on the dilation, we can sample a triangulation $T$ for violations, i.e., pairs $s,t$ of points with $\rho_T(s,t) \geq \rho$.
We observed that a dilation-defining path usually consisted of few edges;
thus, we have a good chance of finding it by sampling.

If it is likely that a new-found triangulation $T'$ violates a given bound $\rho$,
we can thus expect to save time by sampling for violations instead of computing the dilation exactly.
Sampling also allows us to use multiple violations to construct multiple clauses in each iteration,
potentially further reducing the number of iterations.
\binmdt{} uses sampling after each SAT call with a small constant limit on the number of violations.
If violations are found, no full dilation computation is required and violations are used to construct clauses.
Only if no violations are found, we compute the exact dilation;
ideally, this only happens once for each upper bound reduction in the binary search,
namely once we find a triangulation satisfying the current bound.
We also sample in the final improvement phase of \binmdt{}.
For an experimental overview on the number of times sampling was sufficient in comparison to the 
number of times the dilation had to be computed exactly, see \cref{sec:experiments-dilation-computation}.

\section{Empirical evaluation}
\label{sec:experiments}

Now we present experiments to evaluate our algorithms.
We used Python 3.12, with a core module written in C++20 for all computationally heavy tasks;
the code was compiled with GCC 13.2.0 in release mode.
We use CGAL 5.6.1 for geometric primitives and exact number types, Boost 1.83 for utility functions and pybind11 2.12 for Python bindings
and use the incremental SAT solver CaDiCaL 1.9.5 via the PySAT interface for solving the SAT models.
All experiments were performed on Linux workstations equipped with AMD Ryzen 9 7900 CPUs with 12 cores/24 threads and \qty{96}{GiB} of DDR5-5600 RAM running Ubuntu 24.04.1.

\subsection{Research questions}
Our experimental evaluation aims to answer the following questions.
\begin{enumerate}
  \item[Q1] How does the edge enumeration algorithm from \cref{sec:edge-enumeration} compare to the brute force enumeration in terms of runtime and the number of edges eliminated? 
            Can we solely rely on this algorithm or should we reconsider using the brute force enumeration?
  \item[Q2] What is the quality of the initial solutions computed by the Delaunay triangulation and the constrained Delaunay triangulations from \cref{sec:edge-enumeration} compared to the optimal solution?
  \item[Q3] How do our approaches compare to existing algorithms for the MDT w.r.t.\ runtime and solution quality?
            Can we solve instances with thousands of points to provable optimality?
  \item[Q4] How do \binmdt{} and \incmdt{} compare?
            Which should be used for large instances?
\end{enumerate}

\begin{figure}
    \begin{subfigure}[c]{0.49\linewidth}
        \centering
        \includegraphics[width=\linewidth]{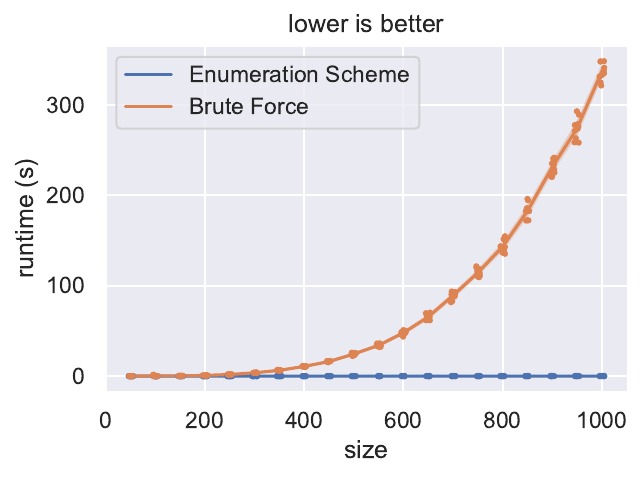}
    \end{subfigure}\hfill
    \begin{subfigure}[c]{0.49\linewidth}
        \centering
        \includegraphics[width=\linewidth]{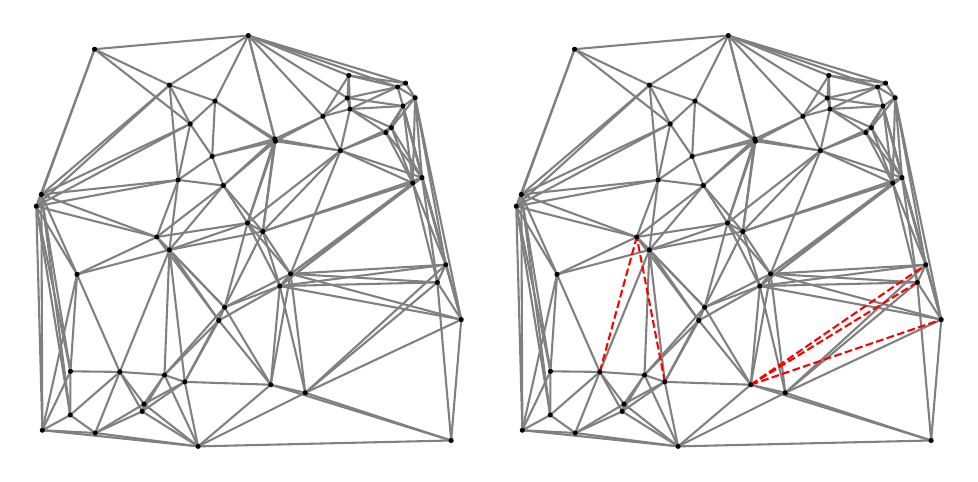}
    \end{subfigure}
    \caption{Comparison of our approach from \cref{sec:edge-enumeration} and the $\Theta(n^4)$ brute force elimination on the \emph{random} instance set.
             \textbf{(Left)} The $\Theta(n^4)$ elimination is infeasible for larger instances.
             \textbf{(Right)}~In all cases, only a small number of (red) edges were not eliminated by our approach.}
    \label{fig:edge-enumeration}
    \label{fig:edge-enumeration-edges}
\end{figure}

\subsection{Experiment design}
To answer our questions, we collected and generated a large set of instances, consisting of instances from the following instance classes.
In all cases, the coordinates of points in the instances are either integers or double precision floating-point numbers.
\begin{description}
  \item[random-small] 
    We include instances from the work of \cite{DBLP:conf/cccg/BrandtGSR14}.
    The $210$ instances have fixed sizes $n\in \{10,20,\dots,70\}$ and were generated by placing uniformly random points inside a $10 \times 10$ square.
    For each size a total of $30$ instances were generated.
  \item[random] 
    We include two sets of randomly generated instances.
    These encompass a set of instances with points with float coordinates chosen uniformly between $0$ and $10^3$,
    ranging from \num{50} to \num{10000} points.
    This resulted in a total of \num{800} instances.
  \item[public] 
    We include instances from all well-known publicly available benchmark instance sets we could locate.
    These include point sets previously used in the CG:SHOP challenges~\cite{demaine2022area,demaine2020computing},
    TSPLIB instances~\cite{reinelt1991}, instances from a VLSI dataset\footnote{https://www.math.uwaterloo.ca/tsp/vlsi/index.html} and
    point sets from the Salzburg Database of Polygonal Inputs~\cite{EDER2020105984}.
    In total, we collected \num{486} instances with up to \num{10000} points and an additional \num{38} with up to \num{30000} points.
\end{description}

\subsection{Q1: Edge enumeration}

We first compare the edge enumeration algorithm from \cref{sec:edge-enumeration} to the $\Theta(n^4)$ brute force enumeration of all possible edges on the \emph{random} instance set.
Both preprocessing options include finding all pairwise intersections between the enumerated segments.
Due to the $\Theta(n^4)$ runtime, we only consider instances with up to \num{1000} points; see~\cref{fig:edge-enumeration}.

The $\Theta(n^4)$ algorithm precisely identifies all edges that have the ellipse property;
it can hence only eliminate more edges than the edge enumeration algorithm from \cref{sec:edge-enumeration}.
However, for our test instances, the number of edges that can be eliminated by our approach is almost identical to the $\Theta(n^4)$ algorithm;
see \cref{fig:edge-enumeration-edges} for an example.
The runtime makes the $\Theta(n^4)$ algorithm infeasible for larger instances; 
it takes more than \qty{250}{s} for instances with only \num{1000} points,
compared to $<\qty{1}{s}$ for our more efficient approach.

\subsection{Q2: Initial solutions}
\label{sec:experiments-initial-solutions}

\begin{figure}
    \begin{subfigure}[t]{0.49\linewidth}
        \centering
        \includegraphics[width=\linewidth]{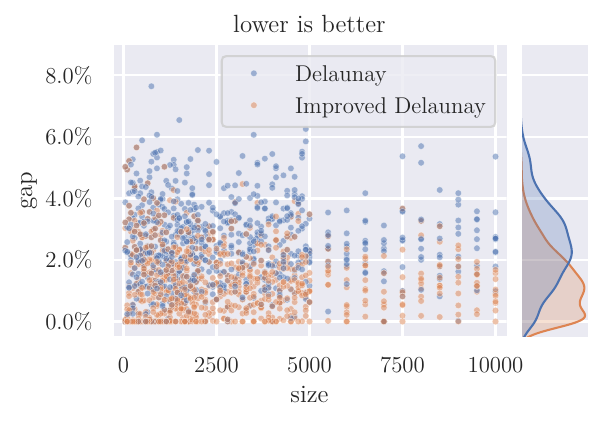}        
    \end{subfigure}\hfill
    \begin{subfigure}[t]{0.49\linewidth}
        \centering
        \includegraphics[width=\linewidth]{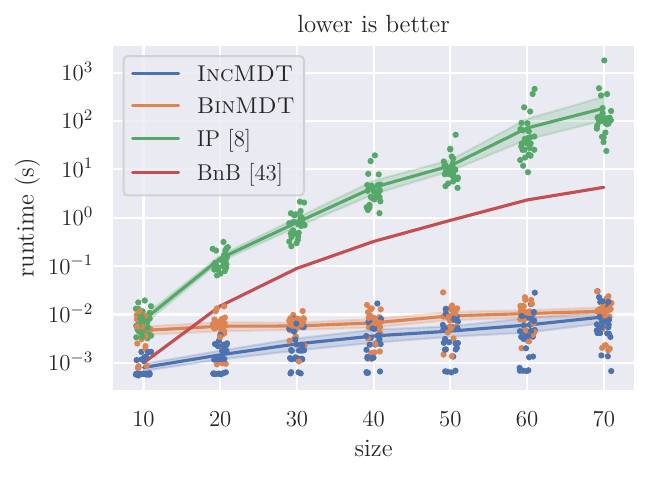}  
    \end{subfigure}
    \caption{\textbf{(Left)} Initial solution comparison on the \emph{random} set with up to \num{10000} points.
    Delaunay triangulations can be improved by shortcut edges, but are often close to the optimal solution.
    \textbf{(Right)}~Runtime comparison with the approaches from \cite{DBLP:conf/cccg/BrandtGSR14} and \cite{DBLP:journals/jgo/SattariI17} on the \emph{random-small} set.}
    \label{fig:initial-solutions-quality}
    \label{fig:brandt-runtime}
\end{figure}

Better initial solutions can reduce the number of candidate edges further than bad ones and thus affect the runtime of the overall algorithm.
Delaunay triangulations are a natural choice for the initial solution, but we suspect that they can be improved by adding shortcut edges.
\Cref{fig:initial-solutions-quality} shows that the relative dilation gap between the Delaunay triangulation and the optimal solution for the \emph{random} instance set.
It can be seen that the introduction of shortcut edges can indeed reduce the gap to the MDT to around \qty{1.5}{\%}.

\subsection{Q3: Comparison to state-of-the-art}
\label{sec:experiments-comparison-to-existing}

We compare our approaches to two exact state-of-the-art algorithms for the MDT.
Note that both of these use floating-point arithmetic and are not guaranteed to find the optimal solution.
However, we can confirm that all previous solutions are within a small relative error of our optimal solution.
The first approach is an integer programming (IP) approach by \cite{DBLP:conf/cccg/BrandtGSR14} that used the commercial software CPLEX to solve the MDT. 
The second comparison is with the most recent exact algorithm (BnB) from Sattari~and~Izadi~\cite{DBLP:journals/jgo/SattariI17}.
For both approaches the source code is no longer available, so we cannot compare our results on the same hardware.
Also for BnB~\cite{DBLP:journals/jgo/SattariI17} the instance data and results are no longer available. 
For both approaches we therefore use the data the authors published in their papers. 
Note that the drastic runtime difference that we see in our experiments cannot be attributed to improved hardware alone.

For \emph{random-small}, both \incmdt{} and \binmdt{} outperform the IP and BnB approach by a large margin (up to four orders of magnitude), see~\cref{fig:brandt-runtime}.
All instances were solvable in less than \qty{0.1}{s}.
Additionally, Sattari~and~Izadi~\cite{DBLP:journals/jgo/SattariI17} provided results for TSPLIB~\cite{reinelt1991} instances (part of our \emph{public} instance set) with up to \num{200} points.
\Cref{tab:tsplib-comparison} shows that our approach is faster than the algorithm from Sattari and Izadi.
We can solve instances to provable optimality in less than \qty{1}{s} while their algorithm took up to \qty{1248}{s}.

\subsection{Q4: Algorithm comparison}
\label{sec:experiments-algorithm-comparison}
\begin{figure}
    \begin{subfigure}[t]{0.49\linewidth}
        \centering
        \includegraphics[width=\linewidth]{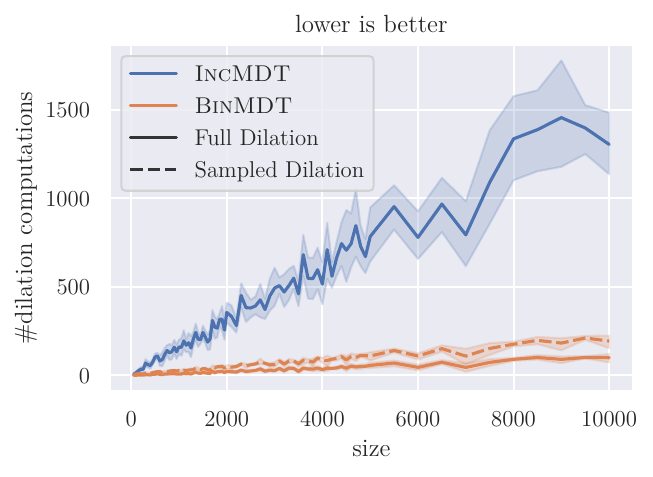}
    \end{subfigure}\hfill
    \begin{subfigure}[t]{0.49\linewidth}
        \centering
        \includegraphics[width=\linewidth]{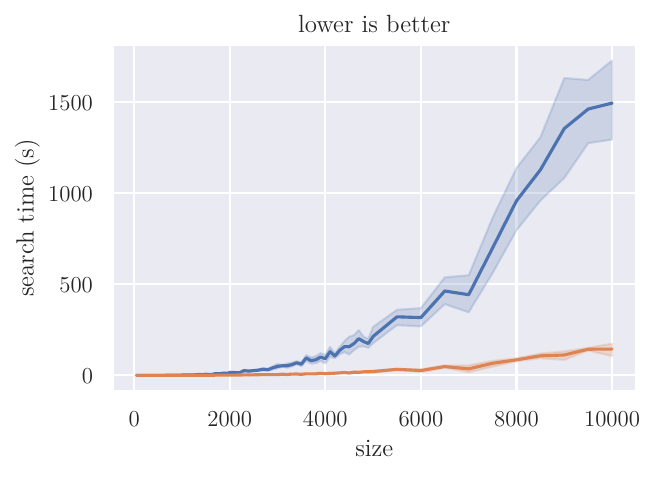}
    \end{subfigure}
    \caption{Experiments on the \emph{random} benchmark set.
             \textbf{(Left)} The number of dilation computations of \binmdt{} is significantly reduced compared to the incremental algorithm.
             \textbf{(Right)} \binmdt{} is significantly faster than the incremental algorithm.}
    \label{fig:experiments-algorithm-comparison}
\end{figure}
\begin{figure}
    \begin{subfigure}[t]{0.49\linewidth}
        \centering
        \includegraphics[width=\linewidth]{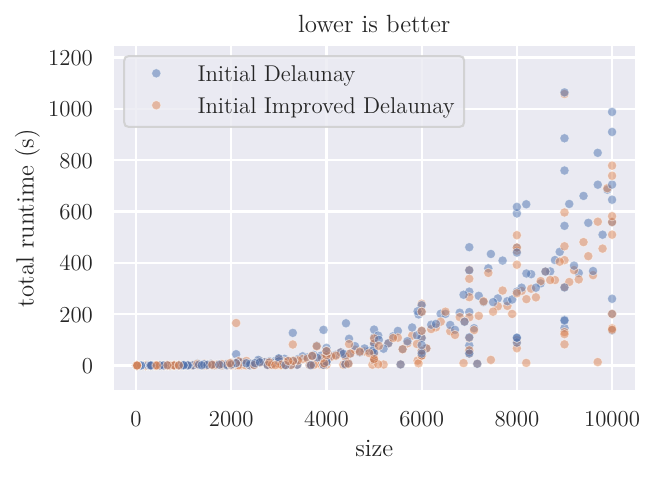}
    \end{subfigure}\hfill
    \begin{subfigure}[t]{0.49\linewidth}
        \centering
        \includegraphics[width=\linewidth]{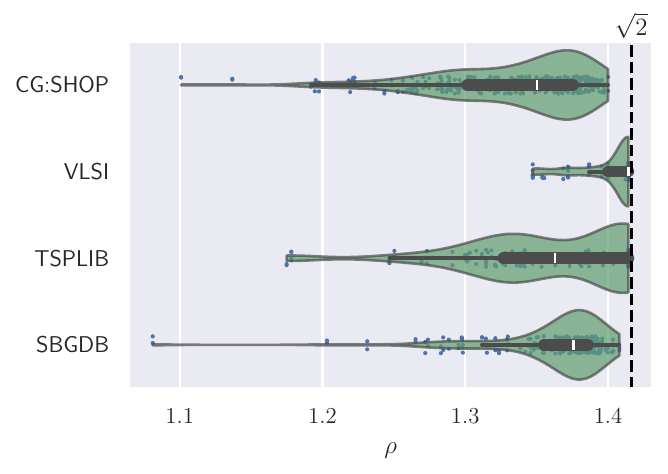}
    \end{subfigure}
    \caption{Experiments on the \emph{public} benchmark set.
             \textbf{(Left)} Using the improved Delaunay triangulation as an initial solution significantly improved the performance.
             \textbf{(Right)} The dilation of the MDTs is at most $\sqrt{2}$ for all instances.}
    \label{fig:experiments-public-instance-set}
\end{figure}

We now compare \incmdt{} to \binmdt{}; see~\cref{sec:appendix-algorithm-comparison} for more detail.
Recall that \binmdt{} aims to reduce the time spent on dilation computations by 
reducing the number of dilation computations and merely sampling the dilation whenever that suffices.
The first experiment was conducted on the \emph{random} instances with up to \num{10000} points;
this experiments confirms that \binmdt{} indeed achieves a significantly lower runtime,
requires fewer dilation computations, and that sampling is sufficient in most cases, see~\cref{fig:experiments-algorithm-comparison}.

After we confirmed that \binmdt{} is superior,
we conducted an additional experiment with two different initial solution strategies on the \emph{public} instances up to \num{10000} points, see~\cref{fig:experiments-public-instance-set}.
The first strategy uses the Delaunay triangulation as an initial solution; the second strategy uses the improved Delaunay triangulation with shortcut edges.
The improved Delaunay triangulation significantly reduces the runtime of \binmdt{} for almost all instances.
Detailed results for all instances of the \emph{public} instance set are shown in~\cref{tab:public-comparison}.
An additional experiment on the \emph{public} instance set shows that \binmdt{} can solve instances with up to \num{30000} points in less than \qty{17}{h} to provable optimality, see~\cref{tab:large-comparison}.

\section{Improved bounds for regular \texorpdfstring{$n$}{n}-gons}\label{sec:n-gon-lb}
Vertex sets of regular $n$-gons are particularly challenging, as there are no points in the
interior to serve as Steiner points for shortcuts, making 
local heuristics less successful. They are also natural candidates for worst-case
dilation, as each diagonal constitutes an obstacle for the separated points.
Thus, they have received considerable
attention~\cite{mulzer2004minimum,DBLP:journals/ijcga/DumitrescuG16,DBLP:journals/comgeo/SattariI19},
with a lower bound of $1.4308$ (see~\cite{DBLP:journals/ijcga/DumitrescuG16}) and an upper bound
of $1.4482$ (see~\cite{DBLP:journals/comgeo/SattariI19})
on their worst-case dilation. 
Improving this gap is an open question posed by~\cite[Problem 1]{DBLP:journals/ijcga/DumitrescuG16}, originating
from~\cite{DBLP:journals/comgeo/BoseS13, DBLP:conf/iccit/Kanj13}.
With the help of our exact algorithm (see~\cref{sec:appendix-n-gons}),
we were able to compute bounds for $n\in\{4,5,\dots,100\}$,
answering the problem by Dumitrescu and Ghosh.

Our implementation currently uses floating-point precision to represent the coordinates of the input points.
Thus, we cannot exactly represent the necessarily irrational points of any regular $n$-gon ($n \geq 5$) in our solver.
However, we can compute the MDT of a rational point set that is a good approximation of a regular $n$-gon.
We can then bound the error and thus obtain a rigorous lower bound on the dilation of the regular $n$-gon.
\begin{theorem}
    Let $P$ be a set of $n = \lbN$ points placed at the vertices of a regular $n$-gon.
    Then the dilation of the MDT of $P$ is $\rho \geq \lbRho$.
\end{theorem}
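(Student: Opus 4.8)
The plan is to avoid representing the irrational vertices of the regular $\lbN$-gon exactly by instead solving the MDT on a rational point set $P'$ that is provably close to the true vertex set $P$, and then transferring the resulting lower bound. First I would take $P'$ to be the double-precision roundings of the true coordinates, so that each true vertex $v_i$ and its rational counterpart $v_i'$ satisfy $\lVert v_i - v_i'\rVert \le \epsilon$ with $\epsilon = \lbEpsilon$; this bound is merely the coordinate rounding error and can be certified with interval arithmetic. Since $P'$ is rational, the exact algorithm of \cref{sec:exact-algorithms} runs on it with exact number types, deciding every dilation comparison (a comparison of sums of square roots of rationals) exactly. This produces the exact minimum dilation $\rho(\mathrm{MDT}(P'))$ of the approximation, whose lower bound is certified by the SAT-unsatisfiability argument together with the dilation thresholds $\vartheta(\cdot)$ and the maximin lower bound of \cref{sec:edge-enumeration}.

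Next I would prove a rigorous perturbation bound: displacing every vertex by at most $\epsilon$ decreases the optimal dilation by at most $\delta = \lbDelta$. Let $T^\star$ be an MDT of $P$ and let $T^{\star\prime}$ be the combinatorially identical edge set placed on $P'$. The vertices of a regular $\lbN$-gon are in convex position with no three collinear, and the resulting nondegeneracy margin vastly exceeds $\epsilon$; hence the perturbation preserves all crossings and the convex hull, so $T^{\star\prime}$ is again a triangulation of $P'$ and $\rho(T^{\star\prime}) \ge \rho(\mathrm{MDT}(P'))$. For the pair $s,t$ realizing $\rho(T^{\star\prime})$ I would bound numerator and denominator separately: the straight-line distance between the corresponding points changes by at most $2\epsilon$, and the length of any fixed $s$-$t$ edge sequence changes by at most $2\epsilon$ per edge, so the shortest $s$-$t$-path length in $P$ falls below the one in $P'$ by at most $2k\epsilon$ for a $k$-edge path. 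Combining these estimates with $\rho(T^{\star\prime}) \ge \rho(\mathrm{MDT}(P'))$ yields $\rho(\mathrm{MDT}(P)) = \rho(T^\star) \ge \rho(\mathrm{MDT}(P')) - \delta$, after which it remains only to check, again with interval arithmetic, that $\rho(\mathrm{MDT}(P')) - \delta \ge \lbRho$.

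The step I expect to be the main obstacle is this perturbation bound. The dilation-defining pair and its shortest path need not be the same in $P$ and $P'$, so the estimate must be uniform over all pairs and must use the inequalities in the correct direction (the path length can only shrink, the straight-line distance can only grow, under the step from $P'$ to $P$). Moreover the per-edge error accumulates with the number $k$ of edges on the critical path, so keeping $\delta$ as small as $\lbDelta$ demands either a control on $k$ or, more robustly, carrying the coordinate uncertainty through the lower-bound computation itself: enclosing each true vertex in an $\epsilon$-box and evaluating the thresholds $\vartheta(\cdot)$ and the maximin bound with interval arithmetic, so that the certificate is simultaneously valid for every configuration inside the boxes, and in particular for the exact regular $\lbN$-gon.
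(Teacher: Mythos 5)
Your proposal is correct and follows essentially the same route as the paper: solve the MDT exactly on a rational floating-point approximation of the $n$-gon, transfer the true optimal triangulation to the perturbed point set via the coordinate bijection (valid since both sets are in convex position with no collinear points), bound the dilation change using a per-edge error that accumulates over the at most $n-1$ edges of the critical path, and certify the error constants with exact arithmetic. The obstacle you flag about controlling $k$ dissolves in the paper's version: it crudely bounds $k \leq n-1$ and splits the error into an absolute distance error $\epsilon$ for the path numerator and a relative error $\delta$ (a factor $\frac{1}{1-\delta}$) for the denominator, which suffices because $(n-1)\epsilon$ divided by the minimum pairwise distance is still far smaller than the slack between the solver's bound on $Q$ and the claimed $\lbRho$.
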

\begin{proof}
    Let $P$ be the point set of a regular $\lbN$-gon and $Q$ be a point set that contains floating-point approximations of the points of $P$.
    There is a bijection $f: P \to Q$ with inverse $f^{-1}$ that maps each point in $P$ to the closest point in $Q$.
    For a given triangulation or path of $P$, we write $f(\cdot)$ to denote the triangulation or path where the points are transformed by pointwise application of $f$.
    Neither $P$ nor $Q$ contain collinear points and all points are in convex position,
    therefore $T$ is a triangulation of $P$ iff $f(T)$ is a triangulation of $Q$.

    Let $\epsilon \geq \max_{p_i, p_j\in P}|d(p_i,p_j)-d(f(p_i),f(p_j))|$ be a bound on the maximum absolute error on distances and let
    $\delta$ be a chosen such that \[\forall p_i,p_j \in P: (1-\delta)d(p_i, p_j) \leq d(f(p_i), f(p_j)) \leq (1+\delta)d(p_i,p_j).\]

    Let $T$ be the MDT of $P$ with dilation $\rho$.
    Let $f(p_i), f(p_j)$ be a dilation-defining pair in $f(T)$ with path $\pi \subset Q$. 
    Because $T$ has dilation $\rho$, we have (i) $\nicefrac{|f^{-1}(\pi)|}{d(p_i,p_j)} \leq \rho$.
    As $\pi$ has at most $n-1$ edges, we have (ii) $|\pi| - (n-1)\epsilon \leq |f^{-1}(\pi)| \leq |\pi| + (n-1)\epsilon$ and thus the following upper bound on the MDT of $Q$.

    \begin{align*}
        \frac{|\pi|}{d(f(p_i),f(p_j))} = 
        \frac{|\pi| - (n-1)\epsilon}{d(f(p_i), f(p_j))} + \frac{(n-1)\epsilon}{d(f(p_i), f(p_j))} 
        &\overset{(ii)}{\leq}
        \frac{|f^{-1}(\pi)|}{(1-\delta)d(p_i, p_j)} + \frac{((n-1)\epsilon)}{d(f(p_i), f(p_j))} \\
        \overset{(i)}{\leq} \frac{1}{1-\delta}\rho + \frac{(n-1)\epsilon}{d(f(p_i), f(p_j))}
        &\leq \frac{1}{1-\delta}\rho + \frac{(n-1)\epsilon}{\min_{q_i,q_j\in Q}d(q_i,q_j)}.
    \end{align*}
    
    Using exact calculations done with \emph{sympy}, we computed $\epsilon \leq \lbEpsilon $ and $\delta \leq \lbDelta$,
    which, combined with the lower bound on the dilation of $Q$ from our solver gives us $\rho \geq \lbRho$.
\end{proof}

\section{Conclusion}
\label{sec:conclusion}
We have presented exact algorithms for minimum dilation triangulations,
greatly outperforming previous methods from the literature.
This has also yielded insights into
the intricate structure of optimal solutions for regular $n$-gons,
together with new lower bounds on the worst-case dilation of triangulations.
This demonstrates the value of computational tools for gaining analytic insights
that seem out of reach with purely manual analysis.

On the theoretical
side, one tantalizing open question remains the complexity of
the MDT; here the intricate structure of solutions for 
regular $n$-gons may show the way for hardness constructions even for 
convex arrangements. An equally fascinating problem is 
the worst-case dilation of triangulations; here we conjecture that 
the asymptotic dilation for regular $n$-gons 
corresponds to the worst-case dilation 
for general point sets, implying a value between
$\lbRhoShort$ and $1.4482$, with a better lower 
bound achievable via even larger $n$-gons.

On the practical side, the runtime of our algorithms 
is dominated by repeatedly computing the dilation of a triangulation.
With sums of square roots handled as described,
this does not appear to be a difficult problem per se,
so appropriately tailored algorithms or data structures 
for carrying out huge numbers of dilation queries
would be extremely helpful.



\bibliography{main.bib}

\appendix

\section{Implementation details}
\label{sec:implementation-details}
In this section, we describe relevant details omitted in the text above.

\subsection{Enumeration process}
\label{sec:dead-sector-construction-neighbors}
Here we describe the implementation details omitted from \cref{sec:edge-enumeration}.

\subsubsection{Precise activation distances}
\label{sec:precise-activation-distances}
Before resorting to the simplified activation distance computation described in \cref{sec:dead-sectors},
we attempted to compute precise upper bounds on the approximation distance using the elliptic arc and interval arithmetic as follows.
We begun by finding parameters $a,b,c,d,e,f$ that describe the ellipse using the equation $ax^2 + bxy + cy^2 + dx + ey + f = 0$,
based on $5$ points on the ellipse; this involves solving a linear system with $6$ variables and unknowns.

We then continued by a bisection search to find the maximum distance from $p$ to the full ellipse
on which the elliptic arc bounding the dead sector lies.
In each iteration of this procedure, we have to check whether a circle with center $p$ and some radius $r$ 
intersects the ellipse, and if the intersection is possibly between the two rays that form the other bounds of the dead sector.

The intersection points of this circle with the ellipse are the roots of a quartic equation.
Therefore, in each iteration, we have to check whether a quartic equation has a real root.
We found that this computation is costly and numerically unstable;
using interval arithmetic to compute a safe upper bound often produced overly conservative results.
This was mostly due to strong amplification of minute uncertainties contained in $a,\ldots,f$ when evaluating
the 6th-degree discriminant of the quartic equation to check for real roots;
even in cases that were not deliberately constructed to be numerically challenging,
we observed that the computation of the activation distance yielded results that were sometimes worse
than what the much cheaper bounding rectangle-based approach gave us.

\subsubsection{Dead sector construction}
We now discuss which other points we use to construct dead sectors when we encounter a point $t$.
During preliminary experiments, we found it insufficient to consider the closest points to $t$ w.r.t.\ the polar angle around $p$.
Points that are angularly near still can have high Euclidean distance, leading to a large activation distance or an empty dead sector.
For efficiency reasons, it may also be costly to consider large sets, such as all previously encountered points.
We only use a number that is $O(1)$ on average for each point $t$;
this is also needed to guarantee that the overall runtime of the algorithm is at most $O(n^2\log n)$:
otherwise, we might have significantly more dead sectors than points.

We use the following heuristics to determine \emph{interesting neighbors} of a point $t$ with which we attempt to construct dead sectors.
Firstly, we do consider a small constant number of previously encountered points $q$ that are close to $t$ in terms of polar angle.
Secondly, before running the actual enumeration process, we determine for each point a small number of interesting neighbors:
aside from the six closest neighbors of $t$, we also consider all neighbors of $t$ in the Delaunay triangulation.

\subsection{Exact algorithm}
In this section, we describe details omitted from \cref{sec:exact-algorithms}.

\subsubsection{Dilation path separation in practice}
\label{sec:practical-dilation-path-sep}
Recall that solving the dilation path separation problem is a crucial part of our exact algorithm.
We need to repeatedly and exactly compute shortest $s$-$t$-paths on our triangulation supergraph $G$
to build a set of edges $E' \subseteq E \setminus T$ without which there is no sufficiently short $s$-$t$-path;
we then return $\bigvee_{e \in E'} x_e$ as a clause that is violated by the current solution but satisfied by any solution with dilation less than $\rho$.

In order to speed up the computation, we begin by enumerating the vertices that may be on a sufficiently short $s$-$t$-path.
We start by running a bounded version of Dijkstra's algorithm from $s$ to enumerate the set $Q$ of all vertices $v$ that are reachable from $s$ in distance less than $\rho \cdot d(s,t) - d(v,t)$.
If this fails to reach $t$, we know that no sufficiently short path exists.
In this case, we can immediately return the empty clause, which is always violated, as no solution with dilation less than $\rho$ exists.

Otherwise, we further restrict $Q$ by running a second bounded Dijkstra's algorithm from $t$,
this time only considering vertices $v$ reachable from $t$ in distance less than $\rho \cdot d(s,t) - |\pi_G(s,v)|$,
using the shortest path distances computed by the run from $s$.

We also restrict the set of possible edges by checking for each edge between two possible vertices $u, v \in Q$ 
whether either of the two paths $\pi_G(s,v),\pi_G(u,t)$ or $\pi_G(s,u),\pi_G(v,t)$ is shorter than $\rho \cdot d(s,t)$.

Afterwards, we can repeatedly run a bidirectional version of Dijkstra's algorithm,
which simultaneously runs the algorithm from $s$ and $t$, restricted to the possible vertices and edges to find a shortest path $\pi$ between $s$ and $t$;
after each run, if the path is shorter than $\rho \cdot d(s,t)$, we add an edge $e \notin T$ from $\pi$ to $E'$,
ignoring $e$ in all subsequent bidirectional Dijkstra runs; see \cref{fig:path-enumeration} for an illustration.
As soon as the shortest path exceeds $\rho \cdot d(s,t)$ or does not exist, we have found a suitable $E'$.

\begin{figure}
    \centering
    \includegraphics{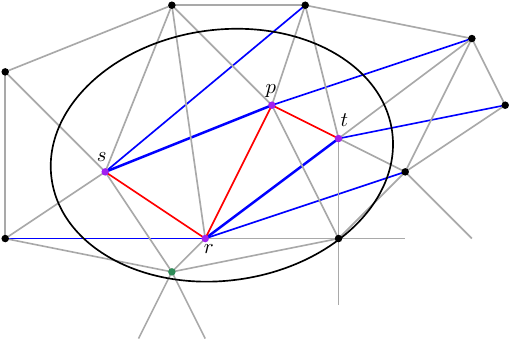}
    \caption{Sketch of the dilation path separation between $s$ and $t$;
             the red path represents the shortest $s$-$t$-path in the current triangulation $T$.
             The ellipse represents the points with total distance $\rho \cdot d(s,t)$ to $s,t$.
             Red and gray edges represent the current triangulation $T$ and are a subset of $E$,
             blue edges are possible edges in $E$, but not in $T$.
             The green point is found from $s$ but eliminated when running Dijkstra from $t$;
             the purple points are the points in $Q$ that may be on sufficiently short $s$-$t$-paths.
             The clause we generate contains the bold blue edges $sp$ and $rt$; 
             without these edges, there is no sufficiently short $s$-$t$-path in $E$.}
    \label{fig:path-enumeration}
\end{figure}

In practice, we have no efficient oracle for square root sum comparisons;
see our remarks on computing exact shortest paths in \cref{sec:exactness-implementation-issues} for details on how we handle this.

\subsubsection{Incremental SAT solving}
\label{sec:incremental-sat-solving}
Many modern SAT solvers are incremental, at least to some extend.
Among other things, this means that we can add clauses to the solver after solving a formula and check whether the new set of clauses is satisfiable.
This is often much more efficient than reconstructing the solver and re-solving the formula from scratch.
Among other things, the solver gets to keep learned clauses and often also parts of the current state of the search;
we see this reflected in the fact that, when we add clauses in \incmdt{}, the majority of the triangulation usually remains unchanged
even though the solver could theoretically change large parts of the triangulation.

However, we must be more careful in \binmdt{}.
As soon as we have used an incremental SAT solver on a bound that cannot be achieved, we have to construct a new solver object.
This is because clauses may have been added to the solver that are not valid for less tight bounds.
We therefore maintain a pool of clauses that we have generated;
this allows us to quickly reconstruct a new SAT solver without losing such clauses.
All clauses that we generate during the current iteration are only added to that pool
once we have found a triangulation that satisfies the current bound.

We could alternatively generate tentative clauses by adding an \emph{assumption literal} $a_i$ on a new variable to all clauses generated by the $i$th iteration of \binmdt{}.
We could then tell the incremental solver to temporarily \emph{assume} that $a_i$ is false, which would effectively enforce the clauses.
If we find a triangulation satisfying the bound, we can make this assumption permanent by adding $\lnot a_i$ as clause.
If we instead find the bound to be too tight, we can disable the clauses by removing the assumption and adding $a_i$ as clause.

We did not study this aspect further, since the time spent in the SAT solver was negligible compared to the time spent computing dilations.

\subsection{Exactness}
\label{sec:exactness-implementation-issues}
Ultimately, we aim for solutions that are exactly optimal, including all numerical issues, as if we were using infinite precision for all our computations.
In this section, we describe the practical challenges this presents and our approach to handling them.

\subsubsection{Dilation and shortest paths}
Achieving this requires us to exactly compute the dilation.
We thus rely on an exact implementation of Dijkstra's algorithm for Euclidean distances.
We run Dijkstra's algorithm once for each source point $p$, computing the ratio between the shortest path and the Euclidean distance for $p$ and all other points.
The theoretical runtime for this is $O(n^2\log n)$ assuming uniform cost for each computational operation; we parallelize the runs for different source points.
We represent each dilation value as a combination of a path defining the dilation and an interval containing the true dilation value.
In most cases, interval arithmetic is sufficient to decide which path defines the dilation;
however, in particular in some non-randomly generated instances that include symmetries, we encounter cases where intervals do not suffice.
In fact, the path defining the dilation is often non-unique in these instances.

In a first prototype, we simply resorted to directly constructing an instance of CGAL's exact number type with support for square roots, representing the precise dilation values in order to compare them if interval arithmetic was insufficient.
However, the comparisons resulted in unacceptable performance issues; it turned out to be crucial to perform extensive preprocessing before resorting to such expensive operations:
even single comparisons of sums consisting of less than $10$ square roots each often took several seconds or minutes.

Let $\pi_1 = (p_1, \ldots, p_k)$ and $\pi_2 = (q_1, \ldots, q_m)$ be two paths for which we want to decide which one incurs a worse detour.
In other words, we want to determine whether
\[\frac{\sum\limits_{i = 2}^k d(p_{i-1}, p_i)}{d(p_1, p_k)} = \underbrace{\sum\limits_{i = 2}^k \frac{d(p_{i-1}, p_i)}{d(p_1, p_k)}}_{(*)} \overset{?}{<} \frac{\sum\limits_{i = 2}^m d(q_{i-1}, q_i)}{d(q_1, q_m)} = \underbrace{\sum\limits_{i = 2}^m \frac{d(q_{i-1}, q_i)}{d(q_1, q_m)}}_{(**)}.\]
Observe that each entry of the sums $(*)$ and $(**)$ is the square root of a rational number;
we can compute and compare these rational numbers at much lower cost than dealing with the square roots.
This allows us to eliminate equal terms from the sums $(*)$ and $(**)$.
We also scan for horizontal or vertical segments, which have rational length for rational input coordinates,
and collect them in a single term on each side.

We use these rational numbers to compute more precise intervals for each of the terms,
summing up $(*) - (**)$ in interval arithmetic in order of increasing absolute value.
If the resulting interval does not contain $0$, we can decide which path has higher dilation.
If it is precisely $[0,0]$, e.g., because we eliminated all terms, we know both paths have the same dilation.
Otherwise, we repeat the computation using intervals of 1024-bit mantissa floating point values using the MPFR~\cite{DBLP:journals/toms/FousseHLPZ07} library instead of double-precision values.
Only if all these steps fail do we construct the dilation values using CGAL's exact number type and compare them;
in all cases we observed, at this point, the resulting values were actually equal.
We apply similar techniques to compare the length of paths instead of dilations in our implementation of Dijkstra's algorithm.

\subsubsection{Candidate edge enumeration}
Not only the computation of shortest paths and the dilation require attention to numerical issues,
but these are the only cases necessarily involving the comparison of sums of square roots.
The incremental search procedure is also exact, i.e., events regarding points and nodes are handled in precisely the correct order,
while activation distance events are handled based on upper bounds on the exact activation distance.
We order all events by squared distances instead of distances and thus maintain rationality,
allowing us to augment interval arithmetic with rational computations to ultimately decide the order of events.
Most other parts of the enumeration process only require exact predicates, such as intersections checks;
these are provided by CGAL and also make use of interval arithmetic to only resort to exact computations if necessary.

Another situation where the use of interval arithmetic is important and not straightforward is the representation of dead sectors.
Since working with true polar angles involves the costly trigonometric function \texttt{atan2},
which is not readily available with correct rounding for use in interval arithmetic,
we instead use \emph{pseudo-angles} as described by Moret and Shapiro~\cite{10.5555/102912}.
In essence, we measure angles by computing distances along the $\ell_1$ unit circle instead of the $\ell_2$ unit circle.
Pseudo-angles are rational and only use operations that are available in correctly-rounded fashion on modern CPUs;
this allows us to use interval arithmetic to handle the polar angle intervals of dead sectors.
In order to allow us to merge polar angle intervals that touch in a single point,
which would not be possible using interval arithmetic alone, we also store, for each polar angle interval,
the lowest and highest point (w.r.t.\ the pseudo-angle around $p$) in the interval.

\section{Empirical evaluation}

In this section, we provide additional details on the experiments conducted in the context of the empirical evaluation,
together with additional figures and data tables; 
note that all experiment data are archived and publicly accessible; see the supplement.

\begin{description}
    \item[QA1] How does the runtime of the dilation computation scale with the number of points in the input instance?
\end{description}

\subsection{QA1: Dilation computation}
\label{sec:experiments-dilation-computation}

\begin{figure}
    \begin{subfigure}[t]{0.49\linewidth}
        \centering
        \includegraphics[width=\linewidth]{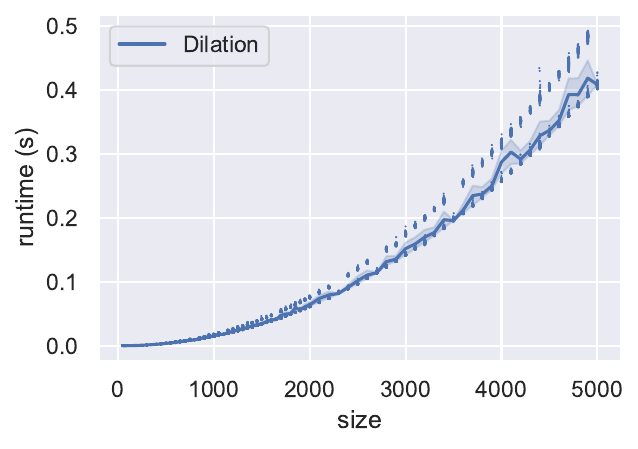}
    \end{subfigure}\hfill
    \begin{subfigure}[t]{0.49\linewidth}
        \centering
        \includegraphics[width=\linewidth]{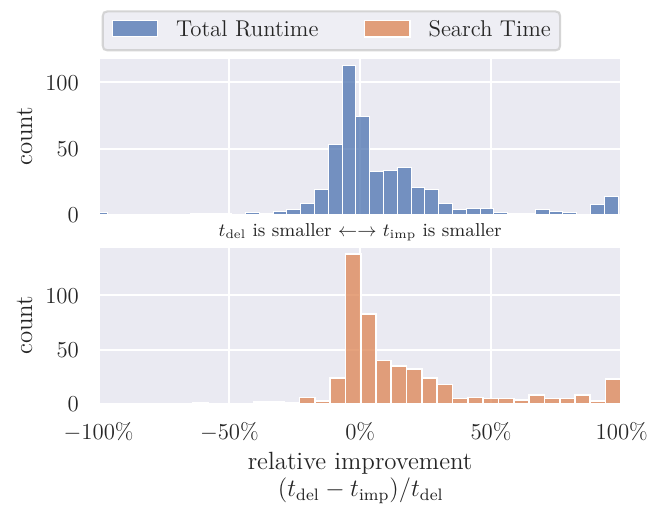}
    \end{subfigure}
    \caption{\textbf{(Left)}~The dilation computation time increases significantly for larger instances. 
             \textbf{(Right)}~Runtime impact of working with the Delaunay triangulation ($t_{\operatorname{del}}$) directly vs.~using the initial improvement algorithm ($t_{\operatorname{imp}}$).
                     Introducing initial improvements reduces the search time of \binmdt{} significantly but can come at the cost of a higher overall runtime.}
    \label{fig:dilation-computation}
    \label{fig:initial-improvement}
\end{figure}

The proposed algorithms rely on the computation of the dilation for some given triangulation.
Our implementation uses an exact version of a bidirectional Dijkstra algorithm to compute the dilation in $O(n^2 \log n)$, see~\cref{sec:exactness-implementation-issues} for details.
The experiment was conducted on the \emph{random} instance set with up to \num{5000} points where the input triangulation is the Delaunay triangulation.
To obtain a reasonable average runtime, we computed the dilation for each instance \num{100} times, see~\cref{fig:dilation-computation}.
A dilation query for an instance with \num{5000} points takes on average slightly more than \qty{0.4}{s}.
As expected, the runtime increases with the number of points in the instance with a quadratic dependency.
In our experiments on the \emph{public} instance set we observed that the runtime of the dilation computation can be significantly higher. This is due to degenerate instances containing structures like regular grids that have a lot of edges and paths with the same length.
Doing exact comparisons of square roots in these cases can be costly and lead to higher runtimes.



\begin{table}[h!t]
    \centering
    \begin{tabular}{|lccc|}
    \hline
    instance & $\rho$ & BnB~\cite{DBLP:journals/jgo/SattariI17} runtime (s)  & \binmdt{} runtime (s) \\
    \hline
    burma14 & \num{1.1749} & \num{0.0040} & \num{0.0061} \\
    ulysses16 & \num{1.1782} & \num{0.0060} & \num{0.0010} \\
    ulysses22 & \num{1.1782} & \num{0.0230} & \num{0.0077} \\
    att48 & \num{1.3279} & \num{0.6570} & \num{0.0098} \\
    eil51 & \num{1.3306} & \num{0.6990} & \num{0.0070} \\
    berlin52 & \num{1.2693} & \num{1.4140} & \num{0.0097} \\
    st70 & \num{1.3002} & \num{8.5910} & \num{0.0099} \\
    eil76 & \num{1.3407} & \num{4.3370} & \num{0.0018} \\
    pr76 & \num{1.3013} & \num{26.7950} & \num{0.0340} \\
    gr96 & \num{1.3310} & \num{20.4990} & \num{0.0031} \\
    rat99 & \num{1.3280} & \num{31.1510} & \num{0.0133} \\
    kroA100 & \num{1.3386} & \num{18.3480} & \num{0.0027} \\
    kroB100 & \num{1.2910} & \num{18.4390} & \num{0.0097} \\
    kroC100 & \num{1.3263} & \num{17.6490} & \num{0.0215} \\
    kroD100 & \num{1.3159} & \num{19.3270} & \num{0.0153} \\
    kroE100 & \num{1.3416} & \num{21.0360} & \num{0.0270} \\
    rd100 & \num{1.3328} & \num{18.7420} & \num{0.0180} \\
    eil101 & \num{1.4142} & \num{14.8230} & \num{0.0153} \\
    lin105 & \num{1.3116} & \num{36.7820} & \num{0.0217} \\
    pr107 & \num{1.2504} & \num{55.8490} & \num{0.1261} \\
    pr124 & \num{1.3077} & \num{67.2010} & \num{0.0305} \\
    bier127 & \num{1.2999} & \num{47.7660} & \num{0.0231} \\
    ch130 & \num{1.3535} & \num{61.8330} & \num{0.0549} \\
    pr136 & \num{1.4037} & \num{159.0230} & \num{0.0066} \\
    gr137 & \num{1.2732} & \num{126.4090} & \num{0.0146} \\
    pr144 & \num{1.2469} & \num{1248.7270} & \num{0.2512} \\
    ch150 & \num{1.3032} & \num{112.2890} & \num{0.0195} \\
    kroA150 & \num{1.3386} & \num{107.5780} & \num{0.0038} \\
    kroB150 & \num{1.3217} & \num{113.8820} & \num{0.0169} \\
    pr152 & \num{1.2770} & \num{547.4960} & \num{0.2999} \\
    u159 & \num{1.4142} & \num{158.4300} & \num{0.0501} \\
    rat195 & \num{1.3436} & \num{610.1950} & \num{0.0274} \\
    d198 & \num{1.4142} & \num{1131.8730} & \num{0.0124} \\
    kroA200 & \num{1.3863} & \num{397.6640} & \num{0.0572} \\
    kroB200 & \num{1.3804} & \num{398.1610} & \num{0.0055} \\
    \hline
    \end{tabular}

    \caption{Solutions and runtimes for small TSPLIB instances.}
    \label{tab:tsplib-comparison}
\end{table}

\subsection{Q4: Algorithm comparison}
\label{sec:appendix-algorithm-comparison}

This section contains additional details on the comparison of the proposed algorithms to each other and should be seen as an extension to~\cref{sec:experiments-algorithm-comparison}.
\cref{fig:experiments-public-instance-set} shows the runtime of the \binmdt{} with two different initial solution strategies on the \emph{public} benchmark set with up to \num{10000} points.

We provide an additional information in~\cref{fig:initial-improvement} that shows the relative runtime change when using the different initial solution strategies.
Let $t_{\mathrm{del}}$ be the runtime of \binmdt{} when starting with the Delaunay triangulation as initial solution and $t_{\mathrm{imp}}$ be the runtime when starting with the improvement Delaunay triangulation.
We distinguish between the overall runtime (including calculating the initial solution) and the search time (the time spent in finding the optimal solution).
What we compare is the relative improvement, i.e. $\frac{t_{\mathrm{del}}-t_{\mathrm{imp}}}{t_{\mathrm{del}}}$.
A positive value indicates that the improved Delaunay triangulation reduces the runtime and a negative value indicates that the initial improvement increases the runtime.
Regarding the search time we observe that the improved solution reduces the time it takes to find and verify the optimal solution. This is expected as the heuristic is known to produce better solutions than the Delaunay triangulation and in fact can never be worse than the Delaunay triangulation, see~\cref{sec:experiments-initial-solutions}.
The overall runtime however can sometimes be higher when using the improved triangulation as the initial solution. This is due to the fact that testing different constrained Delaunay triangulations require multiple dilation computations which can be costly.
Still we conclude that using the improved initial solution is beneficial in most cases and is used as the default setting for the \binmdt{} algorithm.

\section{Additional remarks on regular \texorpdfstring{$n$}{n}-gons}
\label{sec:appendix-n-gons}
It was already noted by Euler (see~\cite{stanley2015catalan}) that the number of triangulations of convex $n$-gons corresponds to the Catalan numbers
$C_{n-2}$, which grow asymptotically like $\frac{4^n}{n^{3/2}\sqrt{\pi}}$, illustrating the futility of performing 
a brute-force enumeration for finding an MDT. 

In \cref{sec:n-gon-lb} we answer the open question posed by Dumitrescu and Ghosh~\cite[Problem 1]{DBLP:journals/ijcga/DumitrescuG16} that originated from Bose, Smit and Kanj~\cite{DBLP:journals/comgeo/BoseS13, DBLP:conf/iccit/Kanj13}.
With the help of our exact algorithm we can obtain a lower bound for the dilation of regular $n$-gons.
It is known that the worst-case dilation of a regular $n$-gon is between $1.4308$ and $1.4482$.

\begin{theorem}[\cite{DBLP:journals/ijcga/DumitrescuG16,DBLP:journals/comgeo/SattariI19}]
    \label{theorem:old-bounds}
    The worst-case dilation of a regular $n$-gon $\rho^*$ is bounded by
    \[
        1.4308 \approx 
            \frac{2\sin(\frac{2\pi}{23}) + \sin(\frac{8\pi}{23})}{\sin(\frac{11\pi}{23})} \leq \rho^* < 1.4482.
    \]
\end{theorem}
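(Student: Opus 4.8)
The plan is to establish the two bounds by entirely different means: the lower bound is witnessed by a single polygon (the regular $23$-gon), while the upper bound must hold for every $n$. Writing $\rho^* = \sup_n \rho_{\mathrm{MDT}}(n)$ for the supremum over $n$ of the minimum dilation attainable on the regular $n$-gon, it suffices for the lower bound to exhibit one value of $n$ whose MDT has dilation at least the stated quantity, and for the upper bound to give a triangulation of every regular $n$-gon with dilation below $1.4482$.

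For the lower bound I would fix $n = 23$ and normalize the circumradius so that the chord joining two vertices $k$ apart has length $\sin(\frac{k\pi}{23})$. The antipodal-most pairs of a regular $23$-gon are the vertices at separation $11$ (equivalently $12$, since $\sin(\frac{11\pi}{23}) = \sin(\frac{12\pi}{23})$), at Euclidean distance $\sin(\frac{11\pi}{23})$. The key claim is that in \emph{every} triangulation some such near-antipodal pair can only be connected by a path that detours the ``long way'' around the waist, along chords realizing the shortest available detour, of separations $2, 2, 8$ (whose angular lengths sum to $12 \equiv 11$), for a total path length of at least $2\sin(\frac{2\pi}{23}) + \sin(\frac{8\pi}{23})$. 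Dividing by the chord gives exactly
\[
  \frac{2\sin(\frac{2\pi}{23}) + \sin(\frac{8\pi}{23})}{\sin(\frac{11\pi}{23})} \approx 1.4308.
\]
Since the worst-case dilation over all $n$ is at least that of the $23$-gon, this yields $\rho^* \geq 1.4308$.

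For the upper bound I would exhibit an explicit family of triangulations, one per $n$, and bound the dilation of every vertex pair uniformly. A natural candidate is a recursive construction that repeatedly splits the polygon by a small number of ``balanced'' diagonals, so that any shortest path in the triangulation follows a bounded, self-similar sequence of chords. Bounding the dilation then reduces to an inductive estimate over the recursion depth whose worst case converges to a constant strictly below $1.4482$; taking the supremum over $n$ preserves the strict inequality.

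The main obstacle is the \emph{universality} of the lower bound: one must rule out a shorter connecting path for \emph{all} triangulations, not merely the optimal one, which amounts to a structural case analysis of the finitely many diagonal configurations that can occur near the waist of the $23$-gon. Here the exact MDT machinery of this paper offers an alternative route: rather than arguing by hand, one can run the solver on a rational approximation of the regular $23$-gon and, using the distance-error bounds of the preceding theorem to control the irrational coordinates, certify that its minimum dilation equals the stated closed form. I expect the by-hand case analysis, rather than the upper-bound construction, to be the delicate part, since the upper bound only requires analyzing a single well-chosen triangulation family.
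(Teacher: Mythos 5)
First, a point of comparison: the paper does not prove this statement at all --- it is imported, with citations, from Dumitrescu--Ghosh (lower bound, via the regular $23$-gon) and Sattari--Izadi (upper bound, for point sets on a circle), so there is no in-paper argument to match. Your plan correctly reconstructs the division of labor in those references: one witness polygon for the lower bound, a universal triangulation family for the upper bound, and your identification of the near-antipodal pair at hop-separation $11$ with a complementary detour of chord-lengths summing to $2\sin(\frac{2\pi}{23})+\sin(\frac{8\pi}{23})$ does match the extremal configuration behind the closed form.

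As a proof, however, the proposal has genuine gaps at exactly the points where the content lies. For the lower bound, your ``key claim'' --- that \emph{every} triangulation of the regular $23$-gon forces some near-antipodal pair onto a path of length at least $2\sin(\frac{2\pi}{23})+\sin(\frac{8\pi}{23})$ --- is asserted rather than proved; that structural case analysis \emph{is} the Dumitrescu--Ghosh argument, and without it you have only verified that the claimed value is realized by one candidate path, not that no triangulation does better. Your computational fallback also cannot deliver the theorem as stated: the rational-approximation machinery of the paper's $84$-gon theorem produces a certified lower bound that is degraded by the error terms $\epsilon$ and $\delta$, hence strictly \emph{below} the true dilation of the exact $23$-gon. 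It can therefore certify $\rho^* \geq 1.4308$ as a truncated decimal, but it cannot certify the closed-form inequality $\frac{2\sin(\frac{2\pi}{23}) + \sin(\frac{8\pi}{23})}{\sin(\frac{11\pi}{23})} \leq \rho^*$, let alone that the $23$-gon's minimum dilation \emph{equals} that expression, which is what your alternative route promises. For the upper bound, the sketch presupposes the result: recursive balanced-diagonal constructions of the kind you describe were known to yield $1.48586$ (Mulzer) and $1.48454$ (Amarnadh--Mitra), and the improvement to $1.4482$ is a specific, delicate construction and analysis due to Sattari and Izadi; nothing in your sketch identifies the construction or shows that the inductive estimate drops below $1.4482$ rather than stalling near $1.485$.
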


\begin{figure}[t!p]
    \centering
    \includegraphics[width=\linewidth]{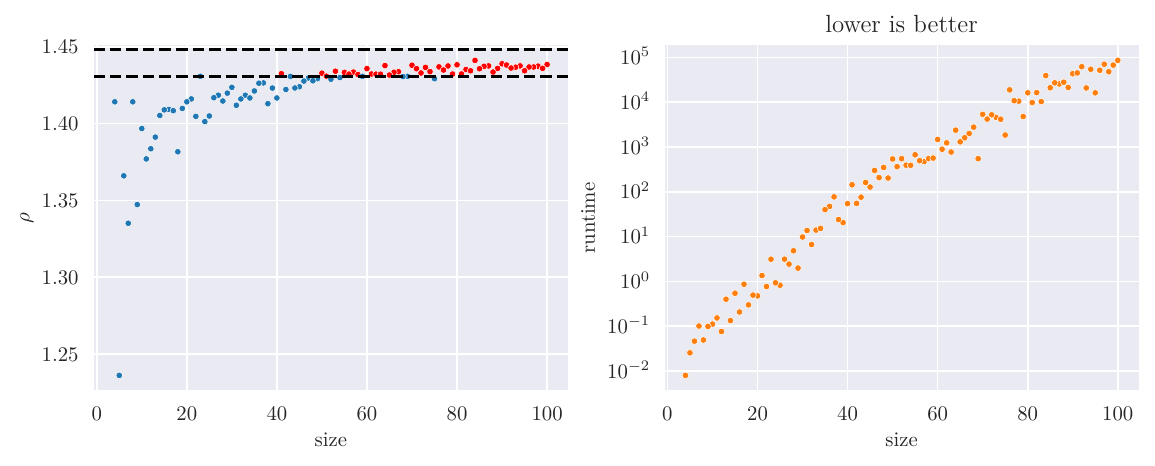}
    \caption{Dilations and runtimes for regular $n$-gons for $4 \leq n \leq 100$.
             Red dots improve the current lower bound on the dilation of regular $n$-gons of $1.4308$ that comes from the regular $23$-gon.
             The dashed black lines mark the known upper bound of $1.4482$ and the previous best lower bound of $1.4308$.}
    \label{fig:n-gon-rho-runtimes}
\end{figure}

\begin{figure}[t!p]
    \centering
    \includegraphics[width=\linewidth]{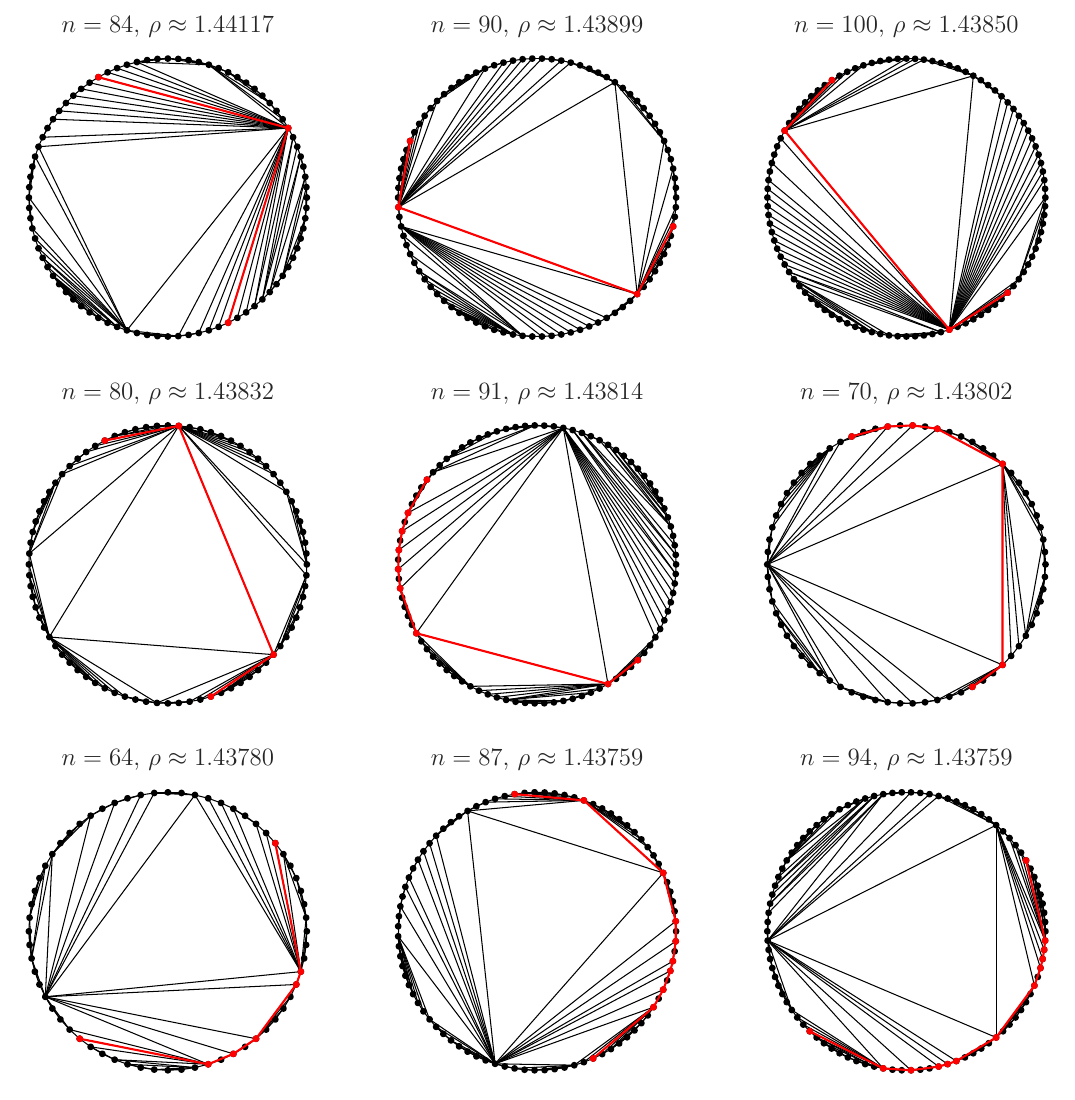}
    \caption{Optimal MDT of regular $n$-gon with floating point coordinates. All solutions improve the current lower bound for regular $n$-gons.} 
    \label{fig:n-gon-examples}
\end{figure}

We ran our exact algorithm for regular $n$-gons for $4 \leq n \leq 100$. 
\Cref{fig:n-gon-rho-runtimes} highlights the previous bounds
from~\cref{theorem:old-bounds} and shows that many regular $n$-gons improve the
current lower bound of $1.4308$. 
\Cref{fig:n-gon-examples} shows some examples of such solutions. 
One can observe that the solution structures appear to be quite intricate, 
highlighting that it would have been quite challenging
to establish these results with only the help of manual analysis. Moreover,
this intricacy may be an indication that the underlying problem does not
lend itself to a simple polynomial-time algorithm.

These difficulties are also visible in the necessary runtimes for $n$-gons.
It is remarkable that the growth (which appears to be exponential in $n$)
is significantly faster than for our results on 
instances from the \emph{uniform} and \emph{public} benchmark sets (see~\cref{sec:experiments-algorithm-comparison}),
with already more than \qty{10}{h} for $n = 84$ points.
(See~\cref{fig:n-gon-rho-runtimes} and~\cref{tab:n-gons} for detailed results.)
An intuitive reason may be that any point in the interior can be used as a Steiner point for shortcuts, 
while also allowing local reduction techniques such as the ellipse property, which fails
to provide a reduction for the vertices of a regular $n$-gon. Conversely, any chord
through the interior constitutes an obstacle for all separated point pairs, producing
cascading sets of constraints. Moreover, 
any solution contains many symmetries that make finding an optimal solution (among many others with similar or same dilation) significantly harder.
As all points are on the boundary of a circle, quadtree queries in the enumeration algorithm are not as effective as in other instance sets.
Finally, regular $n$-gons contain a maximum number of intersections between edges, which makes intersection constraints weaker.
It is this combination of difficulties that make regular $n$-gons all the more interesting,
as they are prime candidates for the worst-case dilation bound.

\begin{conjecture}
\label{con:worst}
The worst-case bound for Minimum Dilation Triangulations corresponds to the 
asymptotic value for regular $n$-gons. This implies that the correct 
value is $1.44\ldots$
\end{conjecture}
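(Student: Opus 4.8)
The conjecture combines two claims: an \emph{extremality} statement, that among all finite planar point sets the worst-case minimum dilation is attained asymptotically by regular $n$-gons, and a \emph{value} statement, that this worst case equals $1.44\ldots$. One direction is immediate: regular $n$-gons form a concrete family, so $\sup_n \rho_{\mathrm{MDT}}^{(n)}$, where $\rho_{\mathrm{MDT}}^{(n)}$ denotes the minimum dilation of the regular $n$-gon, is automatically a lower bound on the global worst case $\rho^{*} := \sup_P \rho_{\mathrm{MDT}}(P)$, and the preceding theorem already certifies $\rho^{*} \geq \lbRho$. The substance is therefore to prove the matching upper bound $\rho^{*} \leq \rho_\infty$, where $\rho_\infty := \lim_{n\to\infty}\rho_{\mathrm{MDT}}^{(n)}$, together with a determination of $\rho_\infty$ itself.

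First I would pin down $\rho_\infty$. The plan is to prove that the limit exists and to compute it by analyzing the limiting structure of optimal triangulations of large regular $n$-gons. The exact solutions for $n \leq 100$ (see~\cref{fig:n-gon-examples}) should be used to conjecture a self-similar or eventually periodic limiting pattern; one would then prove a matching analytic \emph{upper} bound on $\rho_{\mathrm{MDT}}^{(n)}$ converging to the same value, sharpening the $1.4482$ bound of Sattari and Izadi into an asymptotically tight estimate. Establishing monotonicity or subadditivity of the sequence $\rho_{\mathrm{MDT}}^{(n)}$ would yield existence of the limit; the computed values suggest it is eventually increasing toward a value in $[\lbRhoShort,1.4482]$.

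The heart of the conjecture is the extremality claim $\rho^{*} \leq \rho_\infty$, that no configuration is harder than a regular polygon. The natural strategy is a reduction chain: \emph{(a)} reduce general point sets to convex position, arguing that interior points only help, since they act as Steiner points enabling shortcuts while the ellipse property only ever eliminates edges; \emph{(b)} reduce convex point sets to co-circular ones; and \emph{(c)} show that, among co-circular configurations of a fixed size, the equally spaced (regular) one maximizes the minimum dilation, via a symmetrization argument exploiting dihedral symmetry. Each step would be phrased as a monotonicity lemma asserting that the transformation does not decrease $\rho_{\mathrm{MDT}}$.

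The hard part is that none of these monotonicity lemmas is currently known, and the symmetrization step \emph{(c)} in particular looks genuinely difficult: regularity has never been shown to be extremal for dilation, and even the exact value for a \emph{single} regular polygon is open, since this paper only narrows the regular-polygon gap to $[\lbRhoShort,1.4482]$. Proving extremality over \emph{all} point sets is thus strictly harder than that still-unsolved sub-problem, and I expect it to require either a global-optimization formulation that treats the point coordinates as continuous variables and certifies a worst-case configuration --- possibly with the interval-arithmetic and exact-comparison machinery developed elsewhere in this paper --- or a genuinely new structural insight. A realistic intermediate target is to prove the weaker statement that $\rho^{*}$ and $\rho_\infty$ both lie in the narrow interval $[\lbRhoShort,1.4482]$, leaving their exact coincidence as the deep residual question.
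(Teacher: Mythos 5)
You should first note a category error: the paper does not prove this statement at all --- it is posed explicitly as a \emph{conjecture} (\cref{con:worst}), supported only by the exact computations for regular $n$-gons with $4 \leq n \leq 100$ and by qualitative reasoning (no interior points to serve as Steiner shortcuts, every chord obstructing all separated pairs, maximal edge intersections weakening local reductions). So there is no paper proof for your attempt to be measured against, and your proposal --- as you yourself concede in the final paragraph --- is not a proof either: every load-bearing step (existence of the limit $\rho_\infty$, and all three monotonicity lemmas (a)--(c)) is left open. Your one rigorous observation, that $\sup_n \rho^{(n)}_{\mathrm{MDT}}$ lower-bounds the global worst case and that the preceding theorem certifies $\rho^* \geq \lbRho$, is correct and coincides with the paper's actual contribution; everything beyond that is a research program, which is the appropriate register for a conjecture but should not be presented as a proof skeleton with only ``hard parts'' missing.

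Beyond that, two of your specific claims conflict with or overreach the paper's evidence. First, you assert the computed values suggest the sequence $\rho^{(n)}_{\mathrm{MDT}}$ is ``eventually increasing''; the paper's own data (\cref{tab:n-gons}) contradicts this: the maximum $\lbRhoShort$ occurs at $n = \lbN$, while every value for $n = 85,\dots,100$ falls back below it (roughly between $1.433$ and $1.439$), and the sequence visibly oscillates with apparent number-theoretic dependence on $n$ --- so at best one should argue about $\limsup$, and a monotonicity or subadditivity proof along your lines is unlikely to exist in that form. Second, your reduction step (a), that ``interior points only help,'' is not merely unproven but genuinely double-edged: an interior point does enable Steiner shortcuts, but it also enlarges the set of pairs over which the dilation maximum is taken, and the triangulation must remain plane around it, so neither direction of monotonicity is evident. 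Your own closing assessment --- that the symmetrization step (c) is strictly harder than determining the dilation of a single regular $n$-gon, itself open with the gap $[\lbRhoShort, 1.4482]$ --- is accurate and is essentially why the paper stops at a conjecture rather than a theorem.
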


\begin{table}[t!p]
    \centering
    \begin{tabular}{|lcc||lcc||lcc|}
    \hline
    n & $\rho$ & runtime (s) & n & $\rho$ & runtime (s) & n & $\rho$ & runtime (s) \\
    \hline
    \num{4} & $1.41421$ & $<1$ & \num{37} & $1.42651$ & \num{77} & \num{70} & $\mathbf{1.43802}$ & \num{5357} \\
    \num{5} & $1.23606$ & $<1$ & \num{38} & $1.41300$ & \num{24} & \num{71} & $\mathbf{1.43579}$ & \num{4190} \\
    \num{6} & $1.36602$ & $<1$ & \num{39} & $1.42316$ & \num{21} & \num{72} & $\mathbf{1.43298}$ & \num{5236} \\
    \num{7} & $1.33512$ & $<1$ & \num{40} & $1.41671$ & \num{55} & \num{73} & $\mathbf{1.43659}$ & \num{4574} \\
    \num{8} & $1.41421$ & $<1$ & \num{41} & $\mathbf{1.43250}$ & \num{144} & \num{74} & $\mathbf{1.43388}$ & \num{4160} \\
    \num{9} & $1.34729$ & $<1$ & \num{42} & $1.42222$ & \num{55} & \num{75} & $1.42925$ & \num{1850} \\
    \num{10} & $1.39680$ & $<1$ & \num{43} & $1.43074$ & \num{76} & \num{76} & $\mathbf{1.43695}$ & \num{18849} \\
    \num{11} & $1.37703$ & $<1$ & \num{44} & $1.42320$ & \num{161} & \num{77} & $\mathbf{1.43490}$ & \num{10805} \\
    \num{12} & $1.38366$ & $<1$ & \num{45} & $1.42403$ & \num{128} & \num{78} & $\mathbf{1.43755}$ & \num{10536} \\
    \num{13} & $1.39121$ & $<1$ & \num{46} & $1.42771$ & \num{300} & \num{79} & $\mathbf{1.43228}$ & \num{4788} \\
    \num{14} & $1.40532$ & $<1$ & \num{47} & $1.42969$ & \num{208} & \num{80} & $\mathbf{1.43832}$ & \num{16235} \\
    \num{15} & $1.40897$ & $<1$ & \num{48} & $1.42791$ & \num{350} & \num{81} & $\mathbf{1.43232}$ & \num{9818} \\
    \num{16} & $1.40924$ & $<1$ & \num{49} & $1.42927$ & \num{203} & \num{82} & $\mathbf{1.43525}$ & \num{16369} \\
    \num{17} & $1.40844$ & $<1$ & \num{50} & $\mathbf{1.43283}$ & \num{541} & \num{83} & $\mathbf{1.43441}$ & \num{10314} \\
    \num{18} & $1.38169$ & $<1$ & \num{51} & $\mathbf{1.43089}$ & \num{365} & \num{84} & $\mathbf{1.44116}$ & \num{39277} \\
    \num{19} & $1.40988$ & $<1$ & \num{52} & $1.42892$ & \num{549} & \num{85} & $\mathbf{1.43574}$ & \num{21030} \\
    \num{20} & $1.41421$ & $<1$ & \num{53} & $\mathbf{1.43411}$ & \num{393} & \num{86} & $\mathbf{1.43731}$ & \num{27062} \\
    \num{21} & $1.41610$ & \num{1} & \num{54} & $1.42999$ & \num{391} & \num{87} & $\mathbf{1.43759}$ & \num{25565} \\
    \num{22} & $1.40471$ & $<1$ & \num{55} & $\mathbf{1.43335}$ & \num{672} & \num{88} & $\mathbf{1.43357}$ & \num{27903} \\
    \num{23} & $1.43081$ & \num{3} & \num{56} & $\mathbf{1.43224}$ & \num{495} & \num{89} & $\mathbf{1.43592}$ & \num{21326} \\
    \num{24} & $1.40133$ & $<1$ & \num{57} & $\mathbf{1.43356}$ & \num{474} & \num{90} & $\mathbf{1.43898}$ & \num{43300} \\
    \num{25} & $1.40497$ & $<1$ & \num{58} & $\mathbf{1.43199}$ & \num{551} & \num{91} & $\mathbf{1.43814}$ & \num{45019} \\
    \num{26} & $1.41690$ & \num{3} & \num{59} & $1.43077$ & \num{567} & \num{92} & $\mathbf{1.43620}$ & \num{62126} \\
    \num{27} & $1.41850$ & \num{2} & \num{60} & $\mathbf{1.43582}$ & \num{1472} & \num{93} & $\mathbf{1.43675}$ & \num{20787} \\
    \num{28} & $1.41472$ & \num{5} & \num{61} & $\mathbf{1.43230}$ & \num{890} & \num{94} & $\mathbf{1.43758}$ & \num{54273} \\
    \num{29} & $1.41986$ & \num{2} & \num{62} & $\mathbf{1.43221}$ & \num{1237} & \num{95} & $\mathbf{1.43442}$ & \num{16205} \\
    \num{30} & $1.42362$ & \num{10} & \num{63} & $\mathbf{1.43217}$ & \num{772} & \num{96} & $\mathbf{1.43687}$ & \num{51445} \\
    \num{31} & $1.41196$ & \num{14} & \num{64} & $\mathbf{1.43779}$ & \num{2376} & \num{97} & $\mathbf{1.43687}$ & \num{70321} \\
    \num{32} & $1.41607$ & \num{7} & \num{65} & $\mathbf{1.43164}$ & \num{1302} & \num{98} & $\mathbf{1.43743}$ & \num{47885} \\
    \num{33} & $1.41845$ & \num{14} & \num{66} & $\mathbf{1.43343}$ & \num{1609} & \num{99} & $\mathbf{1.43599}$ & \num{67183} \\
    \num{34} & $1.41671$ & \num{15} & \num{67} & $\mathbf{1.43381}$ & \num{1995} & \num{100} & $\mathbf{1.43849}$ & \num{85588} \\
    \num{35} & $1.42127$ & \num{40} & \num{68} & $1.43063$ & \num{2773} &  &  &  \\
    \num{36} & $1.42630$ & \num{47} & \num{69} & $1.43078$ & \num{550} &  &  &  \\
    \hline
\end{tabular}
    \caption{Dilations of regular $n$-gons for $4 \leq n \leq 100$ and the runtime of the \binmdt{} algorithm. Bold values improve the current lower bound on the dilation of regular $n$-gons of $1.4308$ that arises from the regular $23$-gon.}
    \label{tab:n-gons}
\end{table}

\clearpage

\section{Additional tables}
Here we provide a full overview of additional experiment data in table form.

\begin{table}[h!]
    \centering
    \begin{tabular}{|lrrrrr|}
    \hline
    instance & $n$ & $\rho$ & \#full dil. & \#sampled dil. & \binmdt{} time (s) \\
    \hline
    xmc10150 & 10150 & \num{1.4142} & \num{1} & \num{22} & \num{6} \\
rl11849 & 11849 & \num{1.4142} & \num{1} & \num{10} & \num{4} \\
usa13509 & 13509 & \num{1.3882} & \num{307} & \num{451} & \num{735} \\
xvb13584 & 13584 & \num{1.4142} & \num{1} & \num{87} & \num{37} \\
brd14051 & 14051 & \num{1.4142} & \num{1} & \num{21} & \num{6} \\
xrb14233 & 14233 & \num{1.4142} & \num{1} & \num{15} & \num{5} \\
d15112 & 15112 & \num{1.3852} & \num{105} & \num{199} & \num{332} \\
xia16928 & 16928 & \num{1.4142} & \num{1} & \num{21} & \num{9} \\
pjh17845 & 17845 & \num{1.4142} & \num{1} & \num{18} & \num{8} \\
d18512 & 18512 & \num{1.4142} & \num{86} & \num{200} & \num{502} \\
frh19289 & 19289 & \num{1.4142} & \num{1} & \num{91} & \num{59} \\
fnc19402 & 19402 & \num{1.4142} & \num{1} & \num{8} & \num{4} \\
pntset-0020000 & 20000 & \num{1.4018} & \num{229} & \num{434} & \num{1327} \\
fpg-poly-0000020000 & 20000 & \num{1.3825} & \num{975} & \num{1208} & \num{4664} \\
us-night-0020000 & 20000 & \num{1.3948} & \num{821} & \num{1037} & \num{4098} \\
protein-0020000 & 20000 & \num{1.3949} & \num{780} & \num{1052} & \num{3952} \\
uniform-0020000-2 & 20000 & \num{1.3861} & \num{291} & \num{508} & \num{1548} \\
euro-night-0020000 & 20000 & \num{1.3936} & \num{971} & \num{1250} & \num{4990} \\
uniform-0020000-1 & 20000 & \num{1.3903} & \num{290} & \num{537} & \num{1654} \\
jupiter-0020000 & 20000 & \num{1.3829} & \num{3422} & \num{3934} & \num{16384} \\
ido21215 & 21215 & \num{1.4142} & \num{1} & \num{14} & \num{7} \\
fma21553 & 21553 & \num{1.4142} & \num{1} & \num{62} & \num{61} \\
lsb22777 & 22777 & \num{1.4142} & \num{1} & \num{41} & \num{29} \\
xrh24104 & 24104 & \num{1.4142} & \num{1} & \num{78} & \num{78} \\
bbz25234 & 25234 & \num{1.4142} & \num{1} & \num{26} & \num{17} \\
irx28268 & 28268 & \num{1.4142} & \num{1} & \num{39} & \num{26} \\
fyg28534 & 28534 & \num{1.4142} & \num{1} & \num{38} & \num{34} \\
icx28698 & 28698 & \num{1.4142} & \num{1675} & \num{2026} & \num{32372} \\
boa28924 & 28924 & \num{1.4142} & \num{1} & \num{32} & \num{25} \\
ird29514 & 29514 & \num{1.4142} & \num{1} & \num{45} & \num{52} \\
protein-0030000 & 30000 & \num{1.3931} & \num{1798} & \num{2168} & \num{19527} \\
us-night-0030000 & 30000 & \num{1.3968} & \num{1683} & \num{1972} & \num{18160} \\
euro-night-0030000 & 30000 & \num{1.3976} & \num{1204} & \num{1506} & \num{13871} \\
uniform-0030000-1 & 30000 & \num{1.3946} & \num{322} & \num{615} & \num{4076} \\
pntset-0030000 & 30000 & \num{1.3929} & \num{423} & \num{820} & \num{5635} \\
fpg-poly-0000030000 & 30000 & \num{1.3885} & \num{1042} & \num{1341} & \num{11842} \\
uniform-0030000-2 & 30000 & \num{1.3870} & \num{414} & \num{779} & \num{5391} \\
jupiter-0030000 & 30000 & \num{1.3785} & \num{3394} & \num{6129} & \num{58752} \\
    \hline
    \end{tabular}

    \caption{Solutions and runtimes of \binmdt{} for large instances of the \emph{public} benchmark set. Columns also provide the number of full and sampled dilation computations that are the bottleneck in the exact algorithm.}
    \label{tab:large-comparison}
\end{table}

\clearpage

\begin{longtable}{|lrrlll|}
    \hline
    instance & $n$ & $\rho$ & \#full dil. & \#sampled dil. & \binmdt{} time (s) \\
    \hline
    \endhead
    london-0000010 & 10 & \num{1.1010} & \num{1} & \num{7} & $<1$ \\
stars-0000010 & 10 & \num{1.1953} & \num{1} & \num{5} & $<1$ \\
pntset-0000010 & 10 & \num{1.2032} & \num{1} & \num{7} & $<1$ \\
euro-night-0000010 & 10 & \num{1.2184} & \num{2} & \num{2} & $<1$ \\
uniform-0000010-1 & 10 & \num{1.2528} & \num{2} & \num{4} & $<1$ \\
fpg-poly-0000000010 & 10 & \num{1.0811} & \num{1} & \num{6} & $<1$ \\
uniform-0000010-2 & 10 & \num{1.2215} & \num{1} & \num{9} & $<1$ \\
us-night-0000010 & 10 & \num{1.1920} & \num{1} & \num{1} & $<1$ \\
burma14 & 14 & \num{1.1749} & \num{1} & \num{8} & $<1$ \\
stars-0000015 & 15 & \num{1.1368} & \num{1} & \num{5} & $<1$ \\
london-0000015 & 15 & \num{1.2564} & \num{1} & \num{8} & $<1$ \\
uniform-0000015-2 & 15 & \num{1.2225} & \num{1} & \num{7} & $<1$ \\
uniform-0000015-1 & 15 & \num{1.2434} & \num{1} & \num{7} & $<1$ \\
euro-night-0000015 & 15 & \num{1.1975} & \num{1} & \num{1} & $<1$ \\
us-night-0000015 & 15 & \num{1.2868} & \num{2} & \num{8} & $<1$ \\
ulysses16 & 16 & \num{1.1782} & \num{1} & \num{1} & $<1$ \\
uniform-0000020-1 & 20 & \num{1.2368} & \num{1} & \num{8} & $<1$ \\
uniform-0000020-2 & 20 & \num{1.3345} & \num{1} & \num{8} & $<1$ \\
fpg-poly-0000000020 & 20 & \num{1.3152} & \num{1} & \num{8} & $<1$ \\
pntset-0000020 & 20 & \num{1.2739} & \num{1} & \num{8} & $<1$ \\
us-night-0000020 & 20 & \num{1.3101} & \num{1} & \num{7} & $<1$ \\
euro-night-0000020 & 20 & \num{1.2824} & \num{1} & \num{8} & $<1$ \\
london-0000020 & 20 & \num{1.2747} & \num{1} & \num{8} & $<1$ \\
stars-0000020 & 20 & \num{1.1946} & \num{1} & \num{6} & $<1$ \\
ulysses22 & 22 & \num{1.1782} & \num{1} & \num{7} & $<1$ \\
euro-night-0000025 & 25 & \num{1.2483} & \num{1} & \num{7} & $<1$ \\
uniform-0000025-1 & 25 & \num{1.2888} & \num{1} & \num{1} & $<1$ \\
stars-0000025 & 25 & \num{1.2193} & \num{1} & \num{6} & $<1$ \\
us-night-0000025 & 25 & \num{1.3488} & \num{1} & \num{9} & $<1$ \\
uniform-0000025-2 & 25 & \num{1.3500} & \num{1} & \num{8} & $<1$ \\
london-0000025 & 25 & \num{1.2960} & \num{2} & \num{6} & $<1$ \\
london-0000030 & 30 & \num{1.3355} & \num{1} & \num{1} & $<1$ \\
pntset-0000030 & 30 & \num{1.2720} & \num{2} & \num{7} & $<1$ \\
fpg-poly-0000000030 & 30 & \num{1.2313} & \num{2} & \num{7} & $<1$ \\
euro-night-0000030 & 30 & \num{1.2627} & \num{1} & \num{1} & $<1$ \\
us-night-0000030 & 30 & \num{1.2997} & \num{2} & \num{3} & $<1$ \\
uniform-0000030-2 & 30 & \num{1.3112} & \num{2} & \num{10} & $<1$ \\
uniform-0000030-1 & 30 & \num{1.2722} & \num{1} & \num{7} & $<1$ \\
stars-0000030 & 30 & \num{1.2653} & \num{1} & \num{1} & $<1$ \\
us-night-0000035 & 35 & \num{1.2630} & \num{1} & \num{11} & $<1$ \\
stars-0000035 & 35 & \num{1.3488} & \num{2} & \num{9} & $<1$ \\
uniform-0000035-2 & 35 & \num{1.2930} & \num{1} & \num{8} & $<1$ \\
uniform-0000035-1 & 35 & \num{1.2591} & \num{3} & \num{9} & $<1$ \\
london-0000035 & 35 & \num{1.2363} & \num{1} & \num{6} & $<1$ \\
euro-night-0000035 & 35 & \num{1.3131} & \num{1} & \num{1} & $<1$ \\
stars-0000040 & 40 & \num{1.2998} & \num{1} & \num{5} & $<1$ \\
london-0000040 & 40 & \num{1.2877} & \num{2} & \num{5} & $<1$ \\
pntset-0000040 & 40 & \num{1.2885} & \num{1} & \num{7} & $<1$ \\
us-night-0000040 & 40 & \num{1.2897} & \num{1} & \num{6} & $<1$ \\
euro-night-0000040 & 40 & \num{1.3022} & \num{1} & \num{1} & $<1$ \\
fpg-poly-0000000040 & 40 & \num{1.2953} & \num{1} & \num{7} & $<1$ \\
uniform-0000040-2 & 40 & \num{1.2608} & \num{3} & \num{8} & $<1$ \\
uniform-0000040-1 & 40 & \num{1.3211} & \num{1} & \num{7} & $<1$ \\
uniform-0000045-1 & 45 & \num{1.3216} & \num{2} & \num{7} & $<1$ \\
us-night-0000045 & 45 & \num{1.2954} & \num{1} & \num{1} & $<1$ \\
uniform-0000045-2 & 45 & \num{1.2830} & \num{2} & \num{5} & $<1$ \\
london-0000045 & 45 & \num{1.2705} & \num{1} & \num{5} & $<1$ \\
stars-0000045 & 45 & \num{1.2834} & \num{2} & \num{7} & $<1$ \\
euro-night-0000045 & 45 & \num{1.2839} & \num{2} & \num{4} & $<1$ \\
att48 & 48 & \num{1.3279} & \num{1} & \num{8} & $<1$ \\
london-0000050 & 50 & \num{1.2953} & \num{1} & \num{7} & $<1$ \\
uniform-0000050-2 & 50 & \num{1.3042} & \num{1} & \num{5} & $<1$ \\
stars-0000050 & 50 & \num{1.3652} & \num{1} & \num{8} & $<1$ \\
fpg-poly-0000000050 & 50 & \num{1.3485} & \num{1} & \num{8} & $<1$ \\
uniform-0000050-1 & 50 & \num{1.2665} & \num{1} & \num{1} & $<1$ \\
pntset-0000050 & 50 & \num{1.2845} & \num{3} & \num{7} & $<1$ \\
euro-night-0000050 & 50 & \num{1.2926} & \num{3} & \num{10} & $<1$ \\
us-night-0000050 & 50 & \num{1.2728} & \num{1} & \num{1} & $<1$ \\
eil51 & 51 & \num{1.3306} & \num{1} & \num{6} & $<1$ \\
berlin52 & 52 & \num{1.2693} & \num{3} & \num{8} & $<1$ \\
fpg-poly-0000000060 & 60 & \num{1.2829} & \num{1} & \num{8} & $<1$ \\
london-0000060 & 60 & \num{1.3246} & \num{2} & \num{9} & $<1$ \\
us-night-0000060 & 60 & \num{1.2564} & \num{1} & \num{8} & $<1$ \\
stars-0000060 & 60 & \num{1.3449} & \num{1} & \num{8} & $<1$ \\
uniform-0000060-2 & 60 & \num{1.3171} & \num{3} & \num{5} & $<1$ \\
pntset-0000060 & 60 & \num{1.3117} & \num{1} & \num{6} & $<1$ \\
uniform-0000060-1 & 60 & \num{1.3149} & \num{1} & \num{6} & $<1$ \\
euro-night-0000060 & 60 & \num{1.2973} & \num{1} & \num{7} & $<1$ \\
uniform-0000070-2 & 70 & \num{1.2880} & \num{1} & \num{1} & $<1$ \\
st70 & 70 & \num{1.3002} & \num{2} & \num{6} & $<1$ \\
stars-0000070 & 70 & \num{1.3375} & \num{5} & \num{16} & $<1$ \\
london-0000070 & 70 & \num{1.2838} & \num{4} & \num{11} & $<1$ \\
euro-night-0000070 & 70 & \num{1.3233} & \num{3} & \num{9} & $<1$ \\
fpg-poly-0000000070 & 70 & \num{1.2979} & \num{1} & \num{7} & $<1$ \\
pntset-0000070 & 70 & \num{1.3298} & \num{1} & \num{8} & $<1$ \\
uniform-0000070-1 & 70 & \num{1.2803} & \num{4} & \num{7} & $<1$ \\
us-night-0000070 & 70 & \num{1.2949} & \num{2} & \num{10} & $<1$ \\
pr76 & 76 & \num{1.3013} & \num{3} & \num{18} & $<1$ \\
eil76 & 76 & \num{1.3407} & \num{1} & \num{1} & $<1$ \\
pntset-0000080 & 80 & \num{1.2983} & \num{1} & \num{6} & $<1$ \\
us-night-0000080 & 80 & \num{1.3090} & \num{1} & \num{6} & $<1$ \\
uniform-0000080-1 & 80 & \num{1.3169} & \num{4} & \num{9} & $<1$ \\
london-0000080 & 80 & \num{1.3399} & \num{3} & \num{9} & $<1$ \\
stars-0000080 & 80 & \num{1.3066} & \num{3} & \num{10} & $<1$ \\
fpg-poly-0000000080 & 80 & \num{1.3236} & \num{3} & \num{5} & $<1$ \\
euro-night-0000080 & 80 & \num{1.3598} & \num{3} & \num{9} & $<1$ \\
uniform-0000080-2 & 80 & \num{1.3059} & \num{1} & \num{6} & $<1$ \\
uniform-0000090-1 & 90 & \num{1.3033} & \num{2} & \num{9} & $<1$ \\
euro-night-0000090 & 90 & \num{1.3374} & \num{1} & \num{7} & $<1$ \\
uniform-0000090-2 & 90 & \num{1.3021} & \num{3} & \num{10} & $<1$ \\
stars-0000090 & 90 & \num{1.2900} & \num{1} & \num{5} & $<1$ \\
fpg-poly-0000000090 & 90 & \num{1.3479} & \num{3} & \num{12} & $<1$ \\
us-night-0000090 & 90 & \num{1.3481} & \num{4} & \num{12} & $<1$ \\
pntset-0000090 & 90 & \num{1.2975} & \num{3} & \num{4} & $<1$ \\
london-0000090 & 90 & \num{1.3206} & \num{4} & \num{9} & $<1$ \\
gr96 & 96 & \num{1.3310} & \num{1} & \num{1} & $<1$ \\
rat99 & 99 & \num{1.3280} & \num{4} & \num{10} & $<1$ \\
fpg-poly-0000000100 & 100 & \num{1.2652} & \num{5} & \num{13} & $<1$ \\
kroD100 & 100 & \num{1.3159} & \num{1} & \num{6} & $<1$ \\
rd100 & 100 & \num{1.3328} & \num{2} & \num{10} & $<1$ \\
kroE100 & 100 & \num{1.3416} & \num{2} & \num{13} & $<1$ \\
kroA100 & 100 & \num{1.3386} & \num{1} & \num{1} & $<1$ \\
stars-0000100 & 100 & \num{1.3448} & \num{1} & \num{16} & $<1$ \\
us-night-0000100 & 100 & \num{1.3176} & \num{5} & \num{10} & $<1$ \\
uniform-0000100-1 & 100 & \num{1.3335} & \num{1} & \num{1} & $<1$ \\
london-0000100 & 100 & \num{1.3387} & \num{3} & \num{6} & $<1$ \\
pntset-0000100 & 100 & \num{1.3420} & \num{1} & \num{12} & $<1$ \\
uniform-0000100-2 & 100 & \num{1.3701} & \num{1} & \num{6} & $<1$ \\
euro-night-0000100 & 100 & \num{1.3067} & \num{4} & \num{10} & $<1$ \\
kroC100 & 100 & \num{1.3263} & \num{1} & \num{7} & $<1$ \\
kroB100 & 100 & \num{1.2910} & \num{3} & \num{7} & $<1$ \\
eil101 & 101 & \num{1.4142} & \num{1} & \num{8} & $<1$ \\
lin105 & 105 & \num{1.3116} & \num{6} & \num{15} & $<1$ \\
pr107 & 107 & \num{1.2504} & \num{41} & \num{45} & $<1$ \\
pr124 & 124 & \num{1.3077} & \num{5} & \num{12} & $<1$ \\
bier127 & 127 & \num{1.2999} & \num{3} & \num{9} & $<1$ \\
ch130 & 130 & \num{1.3535} & \num{1} & \num{24} & $<1$ \\
xqf131 & 131 & \num{1.3720} & \num{3} & \num{7} & $<1$ \\
pr136 & 136 & \num{1.4037} & \num{1} & \num{2} & $<1$ \\
gr137 & 137 & \num{1.2732} & \num{3} & \num{8} & $<1$ \\
pr144 & 144 & \num{1.2469} & \num{40} & \num{67} & $<1$ \\
kroA150 & 150 & \num{1.3386} & \num{1} & \num{1} & $<1$ \\
kroB150 & 150 & \num{1.3217} & \num{3} & \num{9} & $<1$ \\
ch150 & 150 & \num{1.3032} & \num{6} & \num{12} & $<1$ \\
pr152 & 152 & \num{1.2770} & \num{71} & \num{141} & $<1$ \\
u159 & 159 & \num{1.4142} & \num{1} & \num{8} & $<1$ \\
rat195 & 195 & \num{1.3436} & \num{2} & \num{10} & $<1$ \\
d198 & 198 & \num{1.4142} & \num{1} & \num{1} & $<1$ \\
stars-0000200 & 200 & \num{1.3574} & \num{2} & \num{8} & $<1$ \\
uniform-0000200-1 & 200 & \num{1.3563} & \num{2} & \num{9} & $<1$ \\
paris-0000200 & 200 & \num{1.3278} & \num{5} & \num{11} & $<1$ \\
us-night-0000200 & 200 & \num{1.3354} & \num{5} & \num{16} & $<1$ \\
kroB200 & 200 & \num{1.3804} & \num{1} & \num{1} & $<1$ \\
fpg-poly-0000000200 & 200 & \num{1.3144} & \num{7} & \num{13} & $<1$ \\
pntset-0000200 & 200 & \num{1.3582} & \num{1} & \num{6} & $<1$ \\
uniform-0000200-2 & 200 & \num{1.3560} & \num{1} & \num{12} & $<1$ \\
kroA200 & 200 & \num{1.3863} & \num{1} & \num{10} & $<1$ \\
euro-night-0000200 & 200 & \num{1.3147} & \num{5} & \num{14} & $<1$ \\
gr202 & 202 & \num{1.3184} & \num{2} & \num{7} & $<1$ \\
ts225 & 225 & \num{1.4142} & \num{1} & \num{10} & $<1$ \\
tsp225 & 225 & \num{1.4140} & \num{1} & \num{5} & $<1$ \\
pr226 & 226 & \num{1.4142} & \num{1} & \num{8} & $<1$ \\
gr229 & 229 & \num{1.3382} & \num{9} & \num{18} & $<1$ \\
xqg237 & 237 & \num{1.3720} & \num{8} & \num{21} & $<1$ \\
gil262 & 262 & \num{1.3209} & \num{6} & \num{14} & $<1$ \\
pr264 & 264 & \num{1.3868} & \num{16} & \num{38} & $<1$ \\
a280 & 279 & \num{1.4142} & \num{1} & \num{5} & $<1$ \\
pr299 & 299 & \num{1.4142} & \num{1} & \num{3} & $<1$ \\
stars-0000300 & 300 & \num{1.3392} & \num{18} & \num{34} & $<1$ \\
uniform-0000300-2 & 300 & \num{1.3632} & \num{6} & \num{12} & $<1$ \\
uniform-0000300-1 & 300 & \num{1.3317} & \num{6} & \num{15} & $<1$ \\
us-night-0000300 & 300 & \num{1.3521} & \num{1} & \num{7} & $<1$ \\
fpg-poly-0000000300 & 300 & \num{1.3500} & \num{4} & \num{19} & $<1$ \\
pntset-0000300 & 300 & \num{1.3210} & \num{5} & \num{12} & $<1$ \\
paris-0000300 & 300 & \num{1.3577} & \num{2} & \num{13} & $<1$ \\
euro-night-0000300 & 300 & \num{1.3319} & \num{25} & \num{34} & $<1$ \\
linhp318 & 318 & \num{1.3145} & \num{22} & \num{37} & $<1$ \\
lin318 & 318 & \num{1.3145} & \num{22} & \num{37} & $<1$ \\
pma343 & 343 & \num{1.3473} & \num{8} & \num{25} & $<1$ \\
pka379 & 379 & \num{1.3473} & \num{5} & \num{26} & $<1$ \\
bcl380 & 380 & \num{1.3720} & \num{15} & \num{28} & $<1$ \\
pbl395 & 395 & \num{1.3473} & \num{6} & \num{21} & $<1$ \\
euro-night-0000400 & 400 & \num{1.3486} & \num{23} & \num{32} & $<1$ \\
stars-0000400 & 400 & \num{1.3601} & \num{7} & \num{18} & $<1$ \\
uniform-0000400-2 & 400 & \num{1.3793} & \num{1} & \num{7} & $<1$ \\
rd400 & 400 & \num{1.3489} & \num{10} & \num{21} & $<1$ \\
fpg-poly-0000000400 & 400 & \num{1.3292} & \num{26} & \num{39} & $<1$ \\
pntset-0000400 & 400 & \num{1.3443} & \num{5} & \num{11} & $<1$ \\
us-night-0000400 & 400 & \num{1.3489} & \num{11} & \num{27} & $<1$ \\
paris-0000400 & 400 & \num{1.3392} & \num{4} & \num{15} & $<1$ \\
uniform-0000400-1 & 400 & \num{1.3762} & \num{1} & \num{7} & $<1$ \\
pbk411 & 411 & \num{1.4123} & \num{6} & \num{18} & $<1$ \\
fl417 & 417 & \num{1.4142} & \num{1} & \num{1} & $<1$ \\
pbn423 & 423 & \num{1.3868} & \num{15} & \num{30} & $<1$ \\
gr431 & 431 & \num{1.3409} & \num{7} & \num{16} & $<1$ \\
pbm436 & 436 & \num{1.3868} & \num{12} & \num{24} & $<1$ \\
pr439 & 439 & \num{1.4142} & \num{18} & \num{29} & $<1$ \\
pcb442 & 442 & \num{1.4142} & \num{1} & \num{7} & $<1$ \\
d493 & 493 & \num{1.3416} & \num{9} & \num{17} & $<1$ \\
us-night-0000500 & 500 & \num{1.3324} & \num{17} & \num{35} & $<1$ \\
uniform-0000500-1 & 500 & \num{1.3420} & \num{9} & \num{20} & $<1$ \\
pntset-0000500 & 500 & \num{1.3640} & \num{4} & \num{26} & $<1$ \\
paris-0000500 & 500 & \num{1.3574} & \num{12} & \num{27} & $<1$ \\
stars-0000500 & 500 & \num{1.3405} & \num{6} & \num{15} & $<1$ \\
fpg-poly-0000000500 & 500 & \num{1.3463} & \num{1} & \num{1} & $<1$ \\
euro-night-0000500 & 500 & \num{1.3611} & \num{16} & \num{28} & $<1$ \\
uniform-0000500-2 & 500 & \num{1.3690} & \num{4} & \num{13} & $<1$ \\
ali535 & 506 & \num{1.3361} & \num{27} & \num{46} & $<1$ \\
att532 & 532 & \num{1.3630} & \num{4} & \num{12} & $<1$ \\
u574 & 574 & \num{1.3814} & \num{4} & \num{18} & $<1$ \\
rat575 & 575 & \num{1.3558} & \num{6} & \num{19} & $<1$ \\
euro-night-0000600 & 600 & \num{1.3716} & \num{31} & \num{50} & $<1$ \\
uniform-0000600-1 & 600 & \num{1.3625} & \num{3} & \num{7} & $<1$ \\
paris-0000600 & 600 & \num{1.3797} & \num{1} & \num{1} & $<1$ \\
pntset-0000600 & 600 & \num{1.3385} & \num{8} & \num{19} & $<1$ \\
us-night-0000600 & 600 & \num{1.3999} & \num{1} & \num{20} & $<1$ \\
stars-0000600 & 600 & \num{1.3642} & \num{11} & \num{31} & $<1$ \\
uniform-0000600-2 & 600 & \num{1.3555} & \num{10} & \num{28} & $<1$ \\
fpg-poly-0000000600 & 600 & \num{1.3541} & \num{27} & \num{36} & $<1$ \\
p654 & 654 & \num{1.4142} & \num{1} & \num{8} & \num{1} \\
d657 & 657 & \num{1.3473} & \num{28} & \num{43} & $<1$ \\
xql662 & 662 & \num{1.3552} & \num{37} & \num{58} & $<1$ \\
gr666 & 666 & \num{1.3409} & \num{16} & \num{27} & $<1$ \\
stars-0000700 & 700 & \num{1.3536} & \num{15} & \num{28} & $<1$ \\
uniform-0000700-2 & 700 & \num{1.3792} & \num{1} & \num{4} & $<1$ \\
euro-night-0000700 & 700 & \num{1.3790} & \num{1} & \num{3} & $<1$ \\
fpg-poly-0000000700 & 700 & \num{1.3664} & \num{12} & \num{37} & $<1$ \\
uniform-0000700-1 & 700 & \num{1.3838} & \num{1} & \num{12} & $<1$ \\
us-night-0000700 & 700 & \num{1.3763} & \num{10} & \num{30} & $<1$ \\
pntset-0000700 & 700 & \num{1.3510} & \num{7} & \num{17} & $<1$ \\
paris-0000700 & 700 & \num{1.3303} & \num{17} & \num{30} & $<1$ \\
rbx711 & 711 & \num{1.4000} & \num{9} & \num{29} & $<1$ \\
u724 & 724 & \num{1.4142} & \num{1} & \num{1} & $<1$ \\
rbu737 & 737 & \num{1.4142} & \num{1} & \num{8} & $<1$ \\
rat783 & 783 & \num{1.3613} & \num{5} & \num{17} & $<1$ \\
stars-0000800 & 800 & \num{1.3572} & \num{11} & \num{22} & $<1$ \\
euro-night-0000800 & 800 & \num{1.3425} & \num{44} & \num{63} & $<1$ \\
uniform-0000800-2 & 800 & \num{1.3866} & \num{1} & \num{5} & $<1$ \\
fpg-poly-0000000800 & 800 & \num{1.3795} & \num{25} & \num{44} & $<1$ \\
pntset-0000800 & 800 & \num{1.3942} & \num{1} & \num{4} & $<1$ \\
uniform-0000800-1 & 800 & \num{1.3701} & \num{13} & \num{28} & $<1$ \\
paris-0000800 & 800 & \num{1.3504} & \num{1} & \num{8} & $<1$ \\
us-night-0000800 & 800 & \num{1.3543} & \num{20} & \num{33} & $<1$ \\
dkg813 & 813 & \num{1.4000} & \num{25} & \num{39} & $<1$ \\
us-night-0000900 & 900 & \num{1.3611} & \num{38} & \num{71} & $<1$ \\
uniform-0000900-1 & 900 & \num{1.3636} & \num{5} & \num{17} & $<1$ \\
euro-night-0000900 & 900 & \num{1.3655} & \num{52} & \num{70} & $<1$ \\
stars-0000900 & 900 & \num{1.3352} & \num{28} & \num{44} & $<1$ \\
uniform-0000900-2 & 900 & \num{1.3625} & \num{4} & \num{14} & $<1$ \\
paris-0000900 & 900 & \num{1.3599} & \num{19} & \num{38} & $<1$ \\
fpg-poly-0000000900 & 900 & \num{1.3551} & \num{38} & \num{51} & $<1$ \\
pntset-0000900 & 900 & \num{1.3897} & \num{1} & \num{7} & $<1$ \\
lim963 & 963 & \num{1.4142} & \num{1} & \num{5} & $<1$ \\
pbd984 & 984 & \num{1.4142} & \num{1} & \num{5} & $<1$ \\
skylake-0001000 & 1000 & \num{1.3660} & \num{12} & \num{26} & $<1$ \\
dsj1000 & 1000 & \num{1.3471} & \num{49} & \num{66} & $<1$ \\
us-night-0001000 & 1000 & \num{1.3600} & \num{1} & \num{5} & $<1$ \\
uniform-0001000-2 & 1000 & \num{1.3782} & \num{1} & \num{5} & $<1$ \\
fpg-poly-0000001000 & 1000 & \num{1.3498} & \num{26} & \num{44} & $<1$ \\
uniform-0001000-1 & 1000 & \num{1.3699} & \num{19} & \num{36} & $<1$ \\
euro-night-0001000 & 1000 & \num{1.3750} & \num{1} & \num{30} & $<1$ \\
pntset-0001000 & 1000 & \num{1.3887} & \num{1} & \num{25} & $<1$ \\
paris-0001000 & 1000 & \num{1.3721} & \num{13} & \num{27} & $<1$ \\
pr1002 & 1002 & \num{1.4142} & \num{1} & \num{5} & $<1$ \\
u1060 & 1060 & \num{1.4142} & \num{1} & \num{4} & $<1$ \\
xit1083 & 1083 & \num{1.4142} & \num{1} & \num{16} & $<1$ \\
vm1084 & 1084 & \num{1.3969} & \num{1} & \num{7} & $<1$ \\
fpg-poly-0000001100 & 1100 & \num{1.3605} & \num{29} & \num{53} & $<1$ \\
pcb1173 & 1173 & \num{1.3959} & \num{13} & \num{36} & $<1$ \\
fpg-poly-0000001200 & 1200 & \num{1.3626} & \num{18} & \num{32} & $<1$ \\
d1291 & 1291 & \num{1.4142} & \num{1} & \num{8} & $<1$ \\
fpg-poly-0000001300 & 1300 & \num{1.3702} & \num{53} & \num{74} & \num{1} \\
rl1304 & 1304 & \num{1.3844} & \num{111} & \num{140} & \num{3} \\
rl1323 & 1323 & \num{1.4137} & \num{1} & \num{12} & $<1$ \\
dka1376 & 1376 & \num{1.4142} & \num{1} & \num{10} & $<1$ \\
nrw1379 & 1379 & \num{1.3731} & \num{16} & \num{32} & $<1$ \\
dca1389 & 1389 & \num{1.4000} & \num{46} & \num{65} & \num{2} \\
fpg-poly-0000001400 & 1400 & \num{1.3959} & \num{1} & \num{2} & $<1$ \\
fl1400 & 1400 & \num{1.4142} & \num{1} & \num{1} & $<1$ \\
u1432 & 1432 & \num{1.4142} & \num{8} & \num{18} & \num{3} \\
dja1436 & 1436 & \num{1.4142} & \num{1} & \num{12} & $<1$ \\
icw1483 & 1483 & \num{1.4142} & \num{1} & \num{31} & \num{1} \\
fra1488 & 1488 & \num{1.4142} & \num{1} & \num{6} & $<1$ \\
fpg-poly-0000001500 & 1500 & \num{1.3545} & \num{63} & \num{96} & \num{2} \\
fl1577 & 1577 & \num{1.4142} & \num{1} & \num{1} & \num{5} \\
rbv1583 & 1583 & \num{1.4000} & \num{44} & \num{73} & \num{3} \\
rby1599 & 1599 & \num{1.4000} & \num{36} & \num{65} & \num{3} \\
fpg-poly-0000001600 & 1600 & \num{1.3705} & \num{58} & \num{85} & \num{2} \\
fnb1615 & 1615 & \num{1.4142} & \num{1} & \num{12} & $<1$ \\
d1655 & 1655 & \num{1.4142} & \num{1} & \num{1} & $<1$ \\
fpg-poly-0000001700 & 1700 & \num{1.3457} & \num{60} & \num{85} & \num{3} \\
vm1748 & 1748 & \num{1.3964} & \num{1} & \num{39} & \num{3} \\
djc1785 & 1785 & \num{1.4142} & \num{1} & \num{17} & $<1$ \\
fpg-poly-0000001800 & 1800 & \num{1.3742} & \num{115} & \num{145} & \num{5} \\
u1817 & 1817 & \num{1.4142} & \num{1} & \num{1} & $<1$ \\
rl1889 & 1889 & \num{1.4142} & \num{1} & \num{10} & \num{1} \\
fpg-poly-0000001900 & 1900 & \num{1.3768} & \num{96} & \num{121} & \num{5} \\
dcc1911 & 1911 & \num{1.4142} & \num{1} & \num{18} & $<1$ \\
dkd1973 & 1973 & \num{1.3540} & \num{92} & \num{123} & \num{8} \\
euro-night-0002000 & 2000 & \num{1.3518} & \num{145} & \num{193} & \num{8} \\
pntset-0002000 & 2000 & \num{1.3812} & \num{1} & \num{9} & $<1$ \\
uniform-0002000-1 & 2000 & \num{1.3700} & \num{16} & \num{38} & \num{1} \\
uniform-0002000-2 & 2000 & \num{1.3644} & \num{18} & \num{42} & \num{1} \\
paris-0002000 & 2000 & \num{1.3754} & \num{53} & \num{83} & \num{3} \\
skylake-0002000 & 2000 & \num{1.3680} & \num{29} & \num{62} & \num{2} \\
fpg-poly-0000002000 & 2000 & \num{1.3631} & \num{40} & \num{70} & \num{3} \\
us-night-0002000 & 2000 & \num{1.3745} & \num{63} & \num{105} & \num{5} \\
djb2036 & 2036 & \num{1.4142} & \num{1} & \num{24} & \num{1} \\
dcb2086 & 2086 & \num{1.4142} & \num{1} & \num{6} & $<1$ \\
fpg-poly-0000002100 & 2100 & \num{1.3544} & \num{125} & \num{177} & \num{9} \\
d2103 & 2103 & \num{1.4142} & \num{1} & \num{155} & \num{39} \\
bva2144 & 2144 & \num{1.4142} & \num{1} & \num{9} & $<1$ \\
u2152 & 2152 & \num{1.4142} & \num{61} & \num{109} & \num{14} \\
xqc2175 & 2175 & \num{1.4142} & \num{1} & \num{9} & $<1$ \\
fpg-poly-0000002200 & 2200 & \num{1.3666} & \num{71} & \num{113} & \num{5} \\
bck2217 & 2217 & \num{1.4142} & \num{1} & \num{8} & $<1$ \\
fpg-poly-0000002300 & 2300 & \num{1.3732} & \num{128} & \num{170} & \num{9} \\
xpr2308 & 2308 & \num{1.4142} & \num{1} & \num{5} & $<1$ \\
u2319 & 2319 & \num{1.4142} & \num{1} & \num{1} & $<1$ \\
ley2323 & 2323 & \num{1.3720} & \num{43} & \num{102} & \num{7} \\
dea2382 & 2382 & \num{1.4142} & \num{1} & \num{45} & \num{3} \\
pr2392 & 2392 & \num{1.4142} & \num{1} & \num{3} & $<1$ \\
fpg-poly-0000002400 & 2400 & \num{1.3797} & \num{81} & \num{122} & \num{7} \\
rbw2481 & 2481 & \num{1.4142} & \num{1} & \num{24} & \num{2} \\
fpg-poly-0000002500 & 2500 & \num{1.3656} & \num{109} & \num{150} & \num{9} \\
pds2566 & 2566 & \num{1.3686} & \num{95} & \num{127} & \num{13} \\
mlt2597 & 2597 & \num{1.3868} & \num{130} & \num{182} & \num{18} \\
fpg-poly-0000002600 & 2600 & \num{1.3647} & \num{114} & \num{154} & \num{10} \\
fpg-poly-0000002700 & 2700 & \num{1.3729} & \num{125} & \num{172} & \num{12} \\
bch2762 & 2762 & \num{1.4142} & \num{1} & \num{12} & \num{1} \\
fpg-poly-0000002800 & 2800 & \num{1.3670} & \num{109} & \num{147} & \num{11} \\
irw2802 & 2802 & \num{1.4142} & \num{1} & \num{9} & $<1$ \\
lsm2854 & 2854 & \num{1.4142} & \num{1} & \num{28} & \num{2} \\
fpg-poly-0000002900 & 2900 & \num{1.3945} & \num{66} & \num{123} & \num{10} \\
dbj2924 & 2924 & \num{1.4142} & \num{1} & \num{7} & $<1$ \\
xva2993 & 2993 & \num{1.4142} & \num{1} & \num{8} & $<1$ \\
fpg-poly-0000003000 & 3000 & \num{1.3758} & \num{133} & \num{177} & \num{16} \\
uniform-0003000-1 & 3000 & \num{1.3756} & \num{46} & \num{77} & \num{6} \\
euro-night-0003000 & 3000 & \num{1.3746} & \num{194} & \num{247} & \num{22} \\
pntset-0003000 & 3000 & \num{1.3818} & \num{36} & \num{71} & \num{5} \\
skylake-0003000 & 3000 & \num{1.3839} & \num{53} & \num{93} & \num{7} \\
uniform-0003000-2 & 3000 & \num{1.3864} & \num{23} & \num{57} & \num{4} \\
us-night-0003000 & 3000 & \num{1.3664} & \num{169} & \num{243} & \num{23} \\
paris-0003000 & 3000 & \num{1.3655} & \num{118} & \num{173} & \num{15} \\
pcb3038 & 3038 & \num{1.4142} & \num{1} & \num{14} & $<1$ \\
pia3056 & 3056 & \num{1.4142} & \num{1} & \num{32} & \num{3} \\
dke3097 & 3097 & \num{1.4142} & \num{1} & \num{46} & \num{3} \\
fpg-poly-0000003100 & 3100 & \num{1.3549} & \num{112} & \num{161} & \num{15} \\
lsn3119 & 3119 & \num{1.4142} & \num{1} & \num{9} & $<1$ \\
lta3140 & 3140 & \num{1.4142} & \num{1} & \num{6} & $<1$ \\
fpg-poly-0000003200 & 3200 & \num{1.3695} & \num{127} & \num{170} & \num{17} \\
fdp3256 & 3256 & \num{1.4142} & \num{1} & \num{7} & $<1$ \\
beg3293 & 3293 & \num{1.3955} & \num{297} & \num{462} & \num{80} \\
fpg-poly-0000003300 & 3300 & \num{1.3903} & \num{115} & \num{152} & \num{17} \\
dhb3386 & 3386 & \num{1.4142} & \num{1} & \num{21} & \num{2} \\
fpg-poly-0000003400 & 3400 & \num{1.3735} & \num{126} & \num{175} & \num{19} \\
fpg-poly-0000003500 & 3500 & \num{1.3795} & \num{157} & \num{204} & \num{24} \\
fpg-poly-0000003600 & 3600 & \num{1.3732} & \num{146} & \num{205} & \num{25} \\
fjs3649 & 3649 & \num{1.4142} & \num{1} & \num{5} & $<1$ \\
fjr3672 & 3672 & \num{1.4142} & \num{1} & \num{5} & $<1$ \\
dlb3694 & 3694 & \num{1.4142} & \num{1} & \num{21} & \num{2} \\
fpg-poly-0000003700 & 3700 & \num{1.3741} & \num{221} & \num{284} & \num{36} \\
ltb3729 & 3729 & \num{1.4142} & \num{1} & \num{6} & $<1$ \\
fl3795 & 3795 & \num{1.4142} & \num{1} & \num{1} & \num{52} \\
fpg-poly-0000003800 & 3800 & \num{1.4038} & \num{1} & \num{98} & \num{8} \\
xqe3891 & 3891 & \num{1.4142} & \num{1} & \num{5} & $<1$ \\
fpg-poly-0000003900 & 3900 & \num{1.3691} & \num{133} & \num{191} & \num{27} \\
xua3937 & 3937 & \num{1.4142} & \num{1} & \num{2} & $<1$ \\
dkc3938 & 3938 & \num{1.4142} & \num{1} & \num{20} & \num{2} \\
dkf3954 & 3954 & \num{1.4142} & \num{1} & \num{98} & \num{7} \\
paris-0004000 & 4000 & \num{1.3629} & \num{147} & \num{214} & \num{30} \\
uniform-0004000-2 & 4000 & \num{1.3942} & \num{51} & \num{99} & \num{13} \\
us-night-0004000 & 4000 & \num{1.3902} & \num{176} & \num{231} & \num{38} \\
fpg-poly-0000004000 & 4000 & \num{1.3658} & \num{223} & \num{351} & \num{55} \\
euro-night-0004000 & 4000 & \num{1.3761} & \num{171} & \num{215} & \num{38} \\
pntset-0004000 & 4000 & \num{1.3874} & \num{51} & \num{109} & \num{14} \\
skylake-0004000 & 4000 & \num{1.3970} & \num{1} & \num{21} & \num{2} \\
uniform-0004000-1 & 4000 & \num{1.3668} & \num{49} & \num{93} & \num{11} \\
fpg-poly-0000004100 & 4100 & \num{1.3789} & \num{165} & \num{209} & \num{33} \\
fpg-poly-0000004200 & 4200 & \num{1.3848} & \num{149} & \num{202} & \num{37} \\
fpg-poly-0000004300 & 4300 & \num{1.3634} & \num{225} & \num{298} & \num{50} \\
bgb4355 & 4355 & \num{1.4142} & \num{1} & \num{9} & $<1$ \\
bgd4396 & 4396 & \num{1.4142} & \num{1} & \num{30} & \num{3} \\
fpg-poly-0000004400 & 4400 & \num{1.3998} & \num{131} & \num{181} & \num{33} \\
frv4410 & 4410 & \num{1.4142} & \num{1} & \num{35} & \num{3} \\
fnl4461 & 4461 & \num{1.4084} & \num{15} & \num{34} & \num{5} \\
bgf4475 & 4475 & \num{1.3868} & \num{203} & \num{269} & \num{81} \\
fpg-poly-0000004500 & 4500 & \num{1.3754} & \num{171} & \num{239} & \num{46} \\
fpg-poly-0000004600 & 4600 & \num{1.3669} & \num{217} & \num{275} & \num{60} \\
fpg-poly-0000004700 & 4700 & \num{1.3901} & \num{195} & \num{251} & \num{52} \\
fpg-poly-0000004800 & 4800 & \num{1.3859} & \num{171} & \num{234} & \num{52} \\
fpg-poly-0000004900 & 4900 & \num{1.3983} & \num{1} & \num{225} & \num{46} \\
xqd4966 & 4966 & \num{1.4142} & \num{1} & \num{8} & $<1$ \\
pntset-0005000 & 5000 & \num{1.3846} & \num{50} & \num{104} & \num{20} \\
skylake-0005000 & 5000 & \num{1.3791} & \num{109} & \num{182} & \num{38} \\
fpg-poly-0000005000 & 5000 & \num{1.3771} & \num{268} & \num{326} & \num{78} \\
uniform-0005000-1 & 5000 & \num{1.3871} & \num{53} & \num{97} & \num{18} \\
euro-night-0005000 & 5000 & \num{1.3880} & \num{278} & \num{338} & \num{84} \\
us-night-0005000 & 5000 & \num{1.3717} & \num{322} & \num{408} & \num{104} \\
uniform-0005000-2 & 5000 & \num{1.3760} & \num{68} & \num{122} & \num{25} \\
paris-0005000 & 5000 & \num{1.3818} & \num{171} & \num{237} & \num{54} \\
fqm5087 & 5087 & \num{1.4142} & \num{1} & \num{23} & \num{3} \\
fpg-poly-0000005100 & 5100 & \num{1.3692} & \num{233} & \num{296} & \num{74} \\
fpg-poly-0000005200 & 5200 & \num{1.3853} & \num{1} & \num{9} & \num{1} \\
fpg-poly-0000005300 & 5300 & \num{1.3687} & \num{243} & \num{314} & \num{84} \\
fpg-poly-0000005400 & 5400 & \num{1.3686} & \num{261} & \num{343} & \num{104} \\
fpg-poly-0000005500 & 5500 & \num{1.3844} & \num{292} & \num{365} & \num{106} \\
fea5557 & 5557 & \num{1.4142} & \num{1} & \num{16} & \num{3} \\
fpg-poly-0000005600 & 5600 & \num{1.4035} & \num{1} & \num{211} & \num{62} \\
fpg-poly-0000005700 & 5700 & \num{1.3937} & \num{218} & \num{267} & \num{84} \\
fpg-poly-0000005800 & 5800 & \num{1.3847} & \num{286} & \num{359} & \num{114} \\
fpg-poly-0000005900 & 5900 & \num{1.3864} & \num{192} & \num{258} & \num{82} \\
rl5915 & 5915 & \num{1.4142} & \num{1} & \num{34} & \num{9} \\
rl5934 & 5934 & \num{1.4142} & \num{1} & \num{7} & \num{3} \\
uniform-0006000-1 & 6000 & \num{1.3929} & \num{59} & \num{126} & \num{35} \\
skylake-0006000 & 6000 & \num{1.3940} & \num{106} & \num{188} & \num{55} \\
us-night-0006000 & 6000 & \num{1.3884} & \num{307} & \num{380} & \num{133} \\
uniform-0006000-2 & 6000 & \num{1.3832} & \num{56} & \num{126} & \num{32} \\
fpg-poly-0000006000 & 6000 & \num{1.3811} & \num{207} & \num{283} & \num{106} \\
euro-night-0006000 & 6000 & \num{1.3776} & \num{485} & \num{567} & \num{208} \\
world-0006000 & 6000 & \num{1.3805} & \num{500} & \num{591} & \num{237} \\
pntset-0006000 & 6000 & \num{1.3787} & \num{68} & \num{134} & \num{36} \\
fpg-poly-0000006100 & 6100 & \num{1.3940} & \num{130} & \num{188} & \num{64} \\
fpg-poly-0000006200 & 6200 & \num{1.3727} & \num{289} & \num{366} & \num{137} \\
fpg-poly-0000006300 & 6300 & \num{1.3880} & \num{273} & \num{363} & \num{145} \\
fpg-poly-0000006400 & 6400 & \num{1.3796} & \num{342} & \num{432} & \num{168} \\
fpg-poly-0000006500 & 6500 & \num{1.3693} & \num{355} & \num{452} & \num{206} \\
fpg-poly-0000006600 & 6600 & \num{1.3790} & \num{234} & \num{321} & \num{129} \\
fpg-poly-0000006700 & 6700 & \num{1.3947} & \num{192} & \num{260} & \num{116} \\
fpg-poly-0000006800 & 6800 & \num{1.3663} & \num{317} & \num{433} & \num{186} \\
xsc6880 & 6880 & \num{1.4142} & \num{1} & \num{26} & \num{4} \\
fpg-poly-0000006900 & 6900 & \num{1.3795} & \num{265} & \num{350} & \num{170} \\
uniform-0007000-1 & 7000 & \num{1.3672} & \num{55} & \num{133} & \num{43} \\
euro-night-0007000 & 7000 & \num{1.3844} & \num{422} & \num{522} & \num{263} \\
uniform-0007000-2 & 7000 & \num{1.3868} & \num{80} & \num{147} & \num{55} \\
skylake-0007000 & 7000 & \num{1.3854} & \num{169} & \num{253} & \num{104} \\
us-night-0007000 & 7000 & \num{1.3776} & \num{622} & \num{749} & \num{370} \\
pntset-0007000 & 7000 & \num{1.3919} & \num{61} & \num{144} & \num{53} \\
fpg-poly-0000007000 & 7000 & \num{1.3818} & \num{301} & \num{393} & \num{185} \\
world-0007000 & 7000 & \num{1.3699} & \num{557} & \num{667} & \num{330} \\
fpg-poly-0000007100 & 7100 & \num{1.3897} & \num{218} & \num{297} & \num{134} \\
bnd7168 & 7168 & \num{1.4142} & \num{1} & \num{27} & \num{5} \\
fpg-poly-0000007200 & 7200 & \num{1.3851} & \num{287} & \num{379} & \num{188} \\
fpg-poly-0000007300 & 7300 & \num{1.3786} & \num{346} & \num{443} & \num{248} \\
pla7397 & 7397 & \num{1.4142} & \num{1055} & \num{1098} & \num{6502} \\
fpg-poly-0000007400 & 7400 & \num{1.3812} & \num{560} & \num{656} & \num{357} \\
lap7454 & 7454 & \num{1.4142} & \num{1} & \num{49} & \num{9} \\
fpg-poly-0000007500 & 7500 & \num{1.3864} & \num{291} & \num{379} & \num{204} \\
fpg-poly-0000007600 & 7600 & \num{1.3859} & \num{312} & \num{407} & \num{224} \\
fpg-poly-0000007700 & 7700 & \num{1.3737} & \num{349} & \num{448} & \num{285} \\
fpg-poly-0000007800 & 7800 & \num{1.3790} & \num{302} & \num{391} & \num{227} \\
fpg-poly-0000007900 & 7900 & \num{1.3878} & \num{232} & \num{324} & \num{190} \\
fpg-poly-0000008000 & 8000 & \num{1.3996} & \num{327} & \num{454} & \num{279} \\
pntset-0008000 & 8000 & \num{1.3897} & \num{102} & \num{178} & \num{85} \\
us-night-0008000 & 8000 & \num{1.3791} & \num{595} & \num{694} & \num{444} \\
world-0008000 & 8000 & \num{1.3793} & \num{616} & \num{765} & \num{493} \\
uniform-0008000-2 & 8000 & \num{1.3924} & \num{66} & \num{138} & \num{64} \\
jupiter-0008000 & 8000 & \num{1.3700} & \num{611} & \num{723} & \num{449} \\
euro-night-0008000 & 8000 & \num{1.3909} & \num{432} & \num{547} & \num{386} \\
uniform-0008000-1 & 8000 & \num{1.3956} & \num{90} & \num{197} & \num{90} \\
fpg-poly-0000008100 & 8100 & \num{1.3880} & \num{353} & \num{454} & \num{286} \\
ida8197 & 8197 & \num{1.4142} & \num{1} & \num{10} & \num{2} \\
fpg-poly-0000008200 & 8200 & \num{1.3790} & \num{311} & \num{397} & \num{252} \\
fpg-poly-0000008300 & 8300 & \num{1.3718} & \num{331} & \num{464} & \num{294} \\
fpg-poly-0000008400 & 8400 & \num{1.3885} & \num{294} & \num{411} & \num{261} \\
fpg-poly-0000008500 & 8500 & \num{1.3823} & \num{281} & \num{485} & \num{324} \\
fpg-poly-0000008600 & 8600 & \num{1.3791} & \num{406} & \num{519} & \num{363} \\
fpg-poly-0000008700 & 8700 & \num{1.3777} & \num{328} & \num{450} & \num{328} \\
fpg-poly-0000008800 & 8800 & \num{1.4029} & \num{346} & \num{440} & \num{325} \\
fpg-poly-0000008900 & 8900 & \num{1.3835} & \num{387} & \num{525} & \num{399} \\
world-0009000 & 9000 & \num{1.3723} & \num{621} & \num{746} & \num{591} \\
jupiter-0009000 & 9000 & \num{1.3971} & \num{408} & \num{497} & \num{402} \\
euro-night-0009000 & 9000 & \num{1.3864} & \num{904} & \num{1245} & \num{1053} \\
fpg-poly-0000009000 & 9000 & \num{1.3911} & \num{308} & \num{403} & \num{303} \\
pntset-0009000 & 9000 & \num{1.3883} & \num{75} & \num{131} & \num{74} \\
us-night-0009000 & 9000 & \num{1.3802} & \num{468} & \num{568} & \num{459} \\
uniform-0009000-2 & 9000 & \num{1.3896} & \num{87} & \num{179} & \num{111} \\
uniform-0009000-1 & 9000 & \num{1.3895} & \num{90} & \num{191} & \num{122} \\
fpg-poly-0000009100 & 9100 & \num{1.3823} & \num{268} & \num{411} & \num{318} \\
fpg-poly-0000009200 & 9200 & \num{1.3812} & \num{341} & \num{466} & \num{362} \\
fpg-poly-0000009300 & 9300 & \num{1.3922} & \num{299} & \num{401} & \num{331} \\
fpg-poly-0000009400 & 9400 & \num{1.3789} & \num{416} & \num{509} & \num{468} \\
fpg-poly-0000009500 & 9500 & \num{1.3927} & \num{369} & \num{491} & \num{420} \\
fpg-poly-0000009600 & 9600 & \num{1.3951} & \num{307} & \num{401} & \num{348} \\
dga9698 & 9698 & \num{1.4142} & \num{1} & \num{16} & \num{4} \\
fpg-poly-0000009700 & 9700 & \num{1.3712} & \num{417} & \num{608} & \num{548} \\
fpg-poly-0000009800 & 9800 & \num{1.3896} & \num{389} & \num{487} & \num{447} \\
fpg-poly-0000009900 & 9900 & \num{1.3836} & \num{512} & \num{646} & \num{687} \\
euro-night-0010000 & 10000 & \num{1.3868} & \num{462} & \num{601} & \num{577} \\
uniform-0010000-2 & 10000 & \num{1.3756} & \num{107} & \num{197} & \num{138} \\
world-0010000 & 10000 & \num{1.3973} & \num{519} & \num{651} & \num{727} \\
uniform-0010000-1 & 10000 & \num{1.3811} & \num{133} & \num{244} & \num{195} \\
pntset-0010000 & 10000 & \num{1.4079} & \num{90} & \num{187} & \num{136} \\
jupiter-0010000 & 10000 & \num{1.3888} & \num{618} & \num{786} & \num{773} \\
fpg-poly-0000010000 & 10000 & \num{1.3861} & \num{455} & \num{584} & \num{556} \\
us-night-0010000 & 10000 & \num{1.3859} & \num{397} & \num{515} & \num{502} \\
\hline

\caption{Solutions and runtimes for \emph{public} benchmark set together with the number 
         of full and sampled dilation computations that are the bottleneck in the exact algorithm. 
         The presented data was obtained using the binary solver with an improved delaunay triangulation as an initial solution.}
\label{tab:public-comparison}
\end{longtable}

\end{document}